\documentclass[final,twocolumn,twoside]{IEEEtran}
\usepackage{amsmath,amsfonts,amssymb,amsbsy,bm,paralist,theorem,color}
\usepackage{graphicx,algorithmic,algorithm}
\usepackage{cite}
\usepackage{url}
\newtheorem{lem}{Lemma}
\newtheorem{rem}{Remark}
\newtheorem{Proposition}{Proposition}

\newcommand{\Rset}{\mathbb{R}}

\newcommand{\Cset}{\ensuremath{\mathbb{C}}}
\newcommand{\rank}{\rm{rank}}
\newcommand{\SNR}{{\rm{SNR}}}
\newcommand{\SINR}{{\rm{SINR}}}
\DeclareMathOperator*{\st}{s.t.}
\DeclareMathOperator*{\Tr}{Tr}

\begin{document}
\title{Joint Beamforming and Power Splitting Control in Downlink Cooperative SWIPT NOMA Systems}
\author{
Yanqing Xu, Chao Shen, Zhiguo Ding, Xiaofang Sun, Shi Yan, Gang Zhu, and Zhangdui Zhong\\

\thanks{Manuscript received November 26, 2016; revised March 22, 2017; accepted May 24, 2017. Date of publication; date of current version.
The associate editor coordinating the review of this manuscript and approving it for publication was Prof. Xin Wang.
The work of Y. Xu and C. Shen are supported by NSFC 61501024, RCS2016ZZ004, ISN14-09,
the Fundamental Research Funds for the Central Universities (2017JBM315), and
National S\&T Major Project 2016ZX03001021-003.
The work of Z. Ding is supported by the UK EPSRC under grant number EP/L025272/1 and by H2020-MSCA-RISE-2015 under grant number 690750.
The work of Z. Zhong is supported by NSFC U1334202.
Part of this work was presented at the IEEE International Conference on Communications (ICC), Paris, May. 2017 \cite{Conf_version}. (\textit{Corresponding Author: Chao Shen.})

Y. Xu and C. Shen are with the State Key Laboratory of Rail Traffic Control and Safety, Beijing Jiaotong University, Beijing 100044, China, and also with the State Key Laboratory of Integrated Services Networks, Xidian University, Xi'an  710071, China (e-mail: \{xuyanqing,chaoshen\}@bjtu.edu.cn).

Z. Ding and S. Yan are with the School of Computing and Communications, Lancaster University, Lancaster LA1 4WA, UK (e-mail: \{z.ding, s.yan3\}@lancaster.ac.uk).

X. Sun, G. Zhu and Z. Zhong are with the State Key Laboratory of Rail Traffic Control and Safety, Beijing Jiaotong University, Beijing 100044, China, and also with the Beijing Engineering Research Center of High-speed Railway Broadband Mobile Communications (e-mail:\{sunxiaofang,gzhu,zhdzhong\}@bjtu.edu.cn).

{\bf Bib information}:
Y. Xu, C. Shen, Z. Ding, X. Sun, S. Yan, G. Zhu, and Z. Zhong, ``Joint Beamforming
Design and Power Splitting Control in Cooperative SWIPT NOMA
Systems,'' IEEE Trans. Signal Process., June. 2017.
}}
\maketitle


\begin{abstract}
This paper investigates the application of simultaneous wireless information and power transfer (SWIPT) to cooperative non-orthogonal multiple access (NOMA).
A new cooperative multiple-input single-output (MISO) SWIPT NOMA protocol is proposed, where a user with a strong channel condition acts as an energy-harvesting (EH) relay to help a user with a poor channel condition. The power splitting (PS) scheme is adopted at the EH relay. By jointly optimizing the PS ratio and the beamforming vectors, the design objective is to maximize the data rate of the ``strong user'' while satisfying the QoS requirement of the ``weak user''. It boils down to a challenging nonconvex problem. To resolve this issue, the semidefinite relaxation (SDR) technique is applied to relax the quadratic terms related with the beamformers, and then it is solved to its global optimality by two-dimensional exhaustive search. We prove the rank-one optimality, i.e., the SDR tightness, which establishes the equivalence between the relaxed problem and the original one. To further reduce the high complexity due to the exhaustive search, an iterative algorithm based on successive convex approximation (SCA) is proposed, which can at least attain its stationary point efficiently. In view of the potential application scenarios, e.g., Internet of Things (IoT), the single-input single-output (SISO) case of the cooperative SWIPT NOMA system is also studied. The formulated problem is proved to be strictly unimodal with respect to the PS ratio. Hence, a golden section search (GSS) based algorithm with closed-form solution at each step is proposed to find the unique global optimal solution. It is worth pointing out that the SCA method can also converge to the optimal solution in SISO cases. In the numerical simulation, the proposed algorithm is numerically shown to converge within a few iterations, and the SWIPT-aided NOMA protocol outperforms the existing transmission protocols.
\end{abstract}

\begin{IEEEkeywords}
  Non-orthogonal multiple access (NOMA), simultaneous wireless information and power transfer (SWIPT),
  semidefinite relaxation, successive convex approximation, semiclosed-form solution.
\end{IEEEkeywords}

\section{Introduction}
Recently, great interests have been drawn in non-orthogonal multiple access (NOMA), which is envisioned as an enabling technique to improve the spectral efficiency (SE) of the upcoming fifth generation (5G) network \cite{NOMA-2013}, especially in order to satisfy the requirement of the Internet of Things (IoT) to support massive connectivity
\cite{Ding-Magazine}. The key idea of NOMA is to realize multiple access in the power domain, which allows multiple users to be concurrently severed at the same resource elements, e.g., time slots, frequency bands and spreading codes.
By applying power-domain multiplexing at the transmitter and successive interference cancellation (SIC) at the receivers, the NOMA scheme can achieve a better system performance compared with the conventional orthogonal multiple access (OMA) scheme 
\cite{tse-2005,Ding-2014-SPL,Ding-TVT}. %
Due to the advances in multiple-input multiple-output (MIMO) techniques, the combination of NOMA and MIMO has been investigated to improve the system SE further \cite{Ding-MIMO-TWC2016, Sun-2015-WCL, Lan-2014}. A multiple-input single-output (MISO) NOMA scheme was considered in \cite{Ding-2016-TSP}, in which the downlink sum rate is maximized by applying the minorization-maximization (MM) method. A relay-aided cooperative NOMA scheme \cite{Ding-2015-CL} was proposed to improve the communication reliability of the cell-edge users, wherein, the cell-center users act as relays to help the cell-edge users.

In addition to the requirement of high SE, the energy efficiency (EE) is also a key performance indicator in 5G. As a promising solution to improve EE, scavenging energy from the environment to power the communication system has been studied intensively in the past several years \cite{Sennur-2011-JSAC, Sun-2014}. Since radio-frequency (RF) signals can carry both information and energy, simultaneous wireless information and power transfer (SWIPT) was introduced initially in \cite{Varshney-2008-ISIT} and attracted increasing attention in the communication research community \cite{Zhang-2013-TWC, Shen-2014TSP}.
Specifically, to handle the limitation of the state-of-the-art circuit design, two practical receiver architectures, namely, the time switching (TS) receiver and power splitting (PS) receiver, were investigated in \cite{Zhang-2013-TWC}.
The transmitter design for SWIPT in a MISO interference channel is considered in \cite{Shen-2014TSP},  and  therein the sum rate maximization problem was studied under four practical receiver architectures. In order to exploit the diversity gain of the cooperative transmission, many efforts have been directed towards the relay-aided SWIPT systems using amplify-and-forward (AF) \cite{Nasir-2013-TWC}, or decode-and-forward (DF) \cite{Ding-2014-TWC} protocols.

Motivated by the requirements of 5G and the advantages of NOMA and SWIPT, \cite{Ding-2015-CL} proposed a cooperative NOMA strategy to improve the communication reliability for the cell-edge users. However, due to the limited energy storage at the relay nodes in, e.g., the IoT scenarios, there exists a tradeoff between the information receiving for themselves and information forwarding for others. Thus, the idea of SWIPT is introduced to the NOMA system to alleviate the energy constraint in \cite{Liu-2016}, in which the cell-center users act as energy harvesting (EH) relays to help the cell-edge users and the system outage performance is analyzed with three user-paring schemes in a single-input single-output (SISO) scenario.

In this paper, we propose a novel cooperative SWIPT NOMA protocol and focus on the optimal transmission design.
The considered system consists of a base station (BS) and two users, in which the ``strong user" harvests the energy by using power splitting scheme, and then acts as a relay to help the ``weak user".
Two cases, namely the MISO case and SISO case of the cooperative SWIPT NOMA system, are considered.
In the MISO case, the joint design of beamforming and power splitting is considered.
Note that, in some potential application scenarios, e.g., IoT or massive machine type communications, the SISO antenna configuration still has its advantages, e.g., low power and low cost. Thus it motivates us to investigate the SISO case of the cooperative SWIPT NOMA protocol design.

The main contributions of this paper are summarized as follows.
\begin{itemize}
  \item We propose a cooperative SWIPT-aided NOMA transmission strategy, where the ``strong user'' acts as an EH relay to help the ``weak user'' to improve the communication reliability. By utilizing the power splitting scheme, the ``strong user'' forwards the information
      by using the harvested energy only. As a result, relay transmission is powered by the harvested energy, and doesn't consume energy from its battery in the information forwarding stage.

  \item The data rate maximization of the ``strong user'' is considered in MISO cases. The formulated problem is nonconvex and challenging to solve. We then apply the SDR technique to relax the quadratic terms related with the beamformers. The SDR tightness is proved, which establishes the equivalence between the two problems. However, due to the high complexity to get the global optimal solution, an SCA-based iterative algorithm is developed to efficiently obtain at least a stationary point of the problem.
  \item Motivated by the practical applications, we consider the cooperative SWIPT NOMA transmission strategy in SISO cases.
        We prove that the optimal value of the formulated problem is strictly unimodal with respect to the PS ratio and thus the optimal PS ratio can be efficiently obtained through the GSS method. Thus an iterative algorithm is presented based on the GSS method, and at each iteration, the optimal power allocation coefficient has a closed-form expression.
        In addition, we found that both the SCA- and the GSS-based algorithms can converge to the unique global optimum in SISO cases.
        The performance advantages of the proposed transmission strategy are shown in the simulation section.
\end{itemize}

The rest of the paper is organized as follows. In Section II, the system model and the problem formulation of the cooperative SWIPT NOMA in MISO cases are presented. In Section III,
an SCA-based iterative algorithm is proposed to solve the joint beamforming design and power splitting control problem. The cooperative SWIPT NOMA in SISO cases are analyzed in Section IV and a semiclosed-form optimal solution is derived. Simulation results and conclusions are given in Section V and Section VI, respectively.

\textit{Notations:} Column vectors and matrices are denoted by boldfaced lowercase and uppercase letters, e.g., ${\bf a}$ and ${\bf A}$. The superscript $(\cdot)^T$ and $(\cdot)^H$ denote the transpose and (Hermitian) conjugate transpose, respectively. $\Tr({\bf A})$ and rank$({\bf A})$ represent the trace and the rank of matrix ${\bf A}$, respectively. $\mathbb{E}(\cdot)$ denotes the statistical expectation. $\|{\bf a}\|$ denotes the Euclidean norm of vector ${\bf a}$ and $|b|$ denotes the magnitude of a complex
number $b$.


\section{System Model and Problem Formulation}

\begin{figure}[!t]
  \centering
  \includegraphics[width=0.94\linewidth]{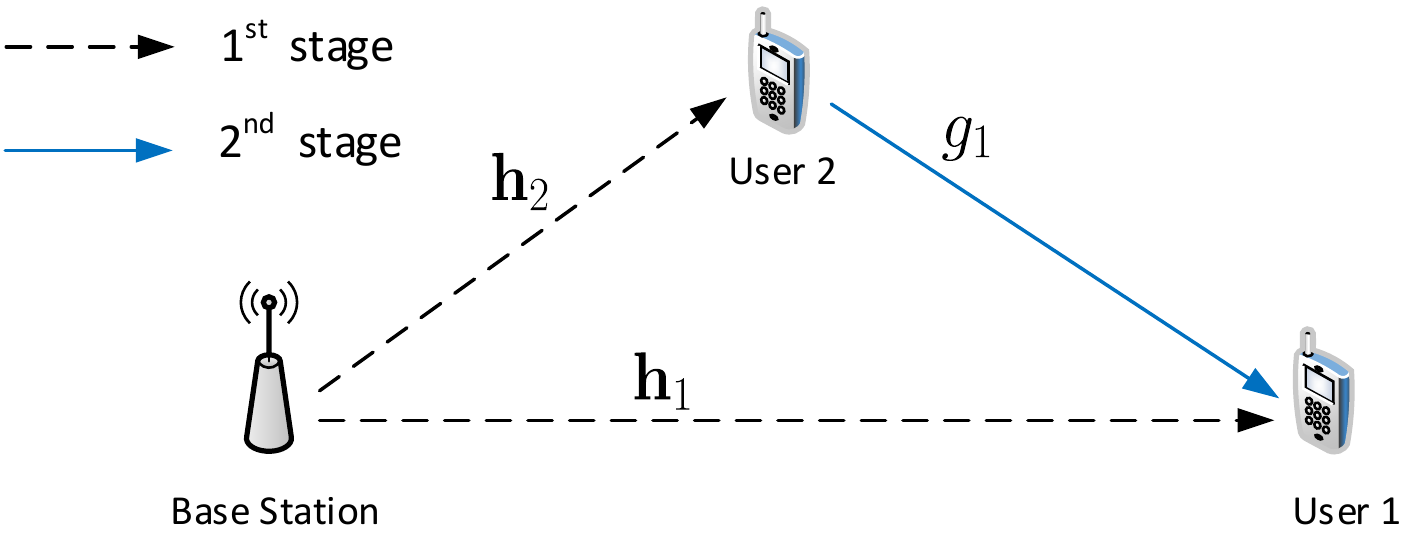}\\
  \caption{A two-stages downlink MISO transmission system.}
\end{figure}

A downlink MISO transmission system is considered, as shown in Fig. 1, where the BS is equipped with $N_t$ antennas and there are two single-antenna users, i.e., user $1$ and user $2$.
Without loss of generality, we assume that these two users have different channel conditions.
It is assumed that user $1$ has a poorer channel condition than user $2$, e.g., user $1$ is a cell-edge user and user $2$ is a cell-center user.
Hence, to guarantee the QoS requirement of user $1$, user $2$ can act as an EH relay to help user $1$.

Two stages are involved in the cooperative SWIPT NOMA transmission.
In the first stage, user 1 receives the signal while user 2 performs SWIPT, i.e., the signal received at user $2$ is split into two parts -- one for information decoding and the other for energy harvesting.
Then in the second stage, user 2 forwards the message to user 1 with the harvested energy, while user 1 combines the message received in the two stages by maximal-ratio combination (MRC) and then decodes it.
The process is detailed as below.

\subsection{Direct Transmission Stage}
In this stage, the transmitted signal at the BS is ${\bf x}= {\bf w}_1 x_1 + {\bf w}_2 x_2$ \cite{Sun-2015-WCL}, where $x_1,x_2\in\Cset$ are the i.i.d. information bearing messages for user $1$ and user $2$, respectively. The power of the transmitted symbol is normalized, i.e.,  $\mathbb{E}|x_1|^2 = \mathbb{E}|x_2|^2 = 1$, and ${\bf w}_1, {\bf w}_2  \in \mathbb{C}^{N_t}$ are the corresponding transmit beamformers, satisfying the power constraint, i.e., $\|{\bf w}_1\|^2+\|{\bf w}_2\|^2 \le 1$.  Then, the observation at user $1$ is given by
\begin{align}
y_1^{(1)} = \sqrt{P_s} {\tilde{\bf h}}_1^{H} ({\bf w}_1 x_1 + {\bf  w}_2 x_2) + z_1^{(1)},
\end{align}
where
$P_s$ is the transmit power at the BS,
$\tilde{{\bf h}}_1\in\mathbb{C}^{N_t}$ is the channel coefficient between the BS and user $1$, and
$z_1^{(1)} \sim \mathcal{CN}(0,\sigma_1^2)$ is the additive Gaussian white noise (AWGN) at user $1$.
Then, the received signal-to-interference-plus-noise-ratio (\SINR) at user $1$ to detect $x_1$ can be expressed as
\begin{align}\label{sinr1a}
\SINR_{1}^{(1)} = \frac{P_s | \tilde{{\bf h}}_1^{H} {\bf  w}_1|^2}{P_s |\tilde{{\bf h}}_1^{H} {\bf  w}_2|^2 +\sigma_1^2} = \frac{ | {\bf h}_1^{H} {\bf  w}_1|^2}{ |{\bf h}_1^{H} {\bf  w}_2|^2 + 1},
\end{align}
where ${\bf h}_1 \triangleq \frac{ \sqrt{P_s}}{\sigma_1} \tilde{{\bf h}}_1$.

For user $2$, the power splitting architecture, as depicted in Fig. \ref{swipt}, is employed to perform SWIPT. Then, the received signal for information decoding at user $2$ can be described as
\begin{align}
y_2^{(1)} = \sqrt{P_s}\sqrt{1 - \beta}\tilde{{\bf h}}_2^H({\bf w}_1 x_1 + {\bf w}_2x_2) + z_2^{(1)},
\end{align}
where $\beta\in[0,1]$ is the PS ratio to be optimized later, $\tilde{{\bf h}}_2 \in \mathbb{C}^{N_t}$ is the channel coefficient between the BS and user $2$,
and $z_2^{(1)} \sim \mathcal{CN}(0,\sigma_2^2)$ is the AWGN.

\begin{figure}[!tp]
  \centering
  \includegraphics[width=0.95\linewidth]{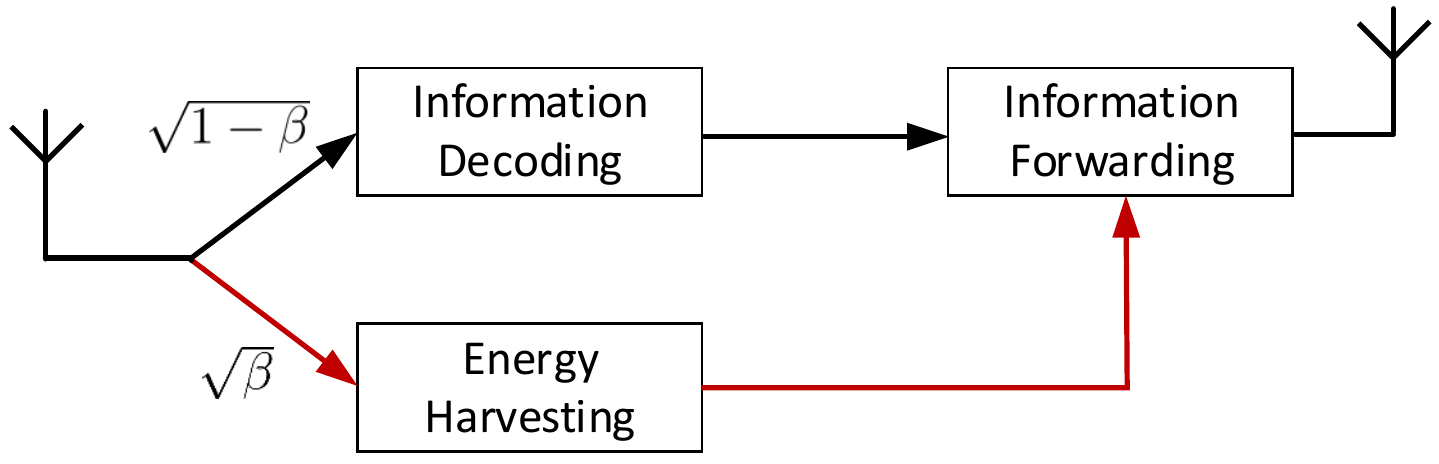}\\
  \caption{The power splitting architecture at user $2$.} \label{swipt}
\end{figure}

According to the NOMA principle, successive interference cancellation (SIC) is performed at user $2$. Particularly, user $2$ first decodes the message to user $1$ (i.e. $x_1$) and then subtracts this message from its observation to decode its own information.

Therefore, by letting ${\bf h}_2 \triangleq \frac{ \sqrt{P_s}}{\sigma_2} \tilde{{\bf h}}_2$, its $\SINR$ 
reads
\begin{align}
\SINR_{2,x_1}^{(1)} &= \frac{ (1-\beta) |{\bf h}_2^{H} {\bf  w}_1|^2} { (1-\beta) |{\bf h}_2^{H} {\bf  w}_2|^2 + 1},
\end{align}
which
should be no less than the target SINR of user 1, denoted by $\gamma_1$, such that the message $x_1$ can be successfully decoded at user $2$.
Hence, we have the constraint
\begin{align}
\frac{ (1-\beta) |{\bf h}_2^{H} {\bf w}_1|^2}{ (1-\beta) |{\bf h}_2^{H} {\bf  w}_2|^2 + 1} \ge \gamma_1.
\end{align}
Then user $2$ subtracts $x_1$ from $y_2^{(1)}$
to further decode its own message $x_2$. The corresponding $\SNR$ can  be expressed as
\begin{align}
\SNR_{2,x_2}^{(1)} &=  (1-\beta) |{\bf h}_2^{H} {\bf  w}_2|^2.
\end{align}

The energy harvested by user $2$ can be modelled as \cite{Shi-2016-tsp}
\begin{align}
E = P_s \beta \left( |\tilde{{\bf h}}_2^{H} {\bf  w}_1|^2 + |\tilde{{\bf h}}_2^{H} {\bf w}_2|^2 \right) \tau,
\end{align}
where $\tau$ is the transmission time fraction for the first stage and we assume that the two stages have the same transmission duration, i.e., $\tau=\frac{1}{2}$.
Assume that the harvested energy is used for information forwarding only, while the energy for maintaining circuit, signal processing, etc., is neglected \cite{Liu-2016}.  Thus, the transmit power at user $2$ is shown to be
\begin{align}
P_t = \frac{E}{1-\tau} = P_s \beta \left( |\tilde{{\bf h}}_2^{H} {\bf w}_1|^2 + |\tilde{{\bf h}}_2^{H} {\bf w}_2|^2 \right).
\end{align}~\vspace{-5mm}

\subsection{Cooperative Transmission Stage}
In this stage, user $2$ forwards $x_1$ to user $1$ by the harvested energy. Therefore, user $1$ observes
\begin{align}
y_1^{(2)} = \sqrt{P_t} g_1 x_1 + z_1^{(2)},
\end{align}
where $g_1\in\mathbb{C}$ is the channel coefficient from user $2$ to user $1$, and $z_1^{(2)}\sim\mathcal{CN}(0,\sigma_1^2)$ is the AWGN at user $1$. Thus, the $\SNR$ to detect $x_1$ is given by
\begin{align}\label{snr2a}
\SNR_{1,x_1}^{(2)} =
 \beta g \left( |{\bf h}_2^{H} {\bf  w}_1|^2 + |{\bf h}_2^{H} {\bf w}_2|^2 \right),
\end{align}
where $g \triangleq |g_1|^2$.

At the end of stage 2, user 1 decodes the message $x_1$ jointly based on the signals received from BS and user $2$ by using MRC. Hence the equivalent SINR at user $1$ can be written as
\begin{align}
\SINR_{1,x_1} &= \SINR_{1,x_1}^{(1)}+\SNR_{1,x_1}^{(2)} \nonumber\\
&= \frac{| {\bf h}_1^{H} {\bf  w}_1|^2}{ |{\bf h}_1^{H} {\bf  \bar{w}}_2|^2 \!+\! 1} \!+\! \beta g \left( |{\bf h}_2^{H} {\bf w}_1|^2 \!+\! |{\bf h}_2^{H} {\bf  w}_2|^2 \right).
\end{align}

The aim of the system design is to maximize the date rate of user $2$, and meanwhile, guarantee the QoS requirement of user $1$. Note that rate maximization of user $2$ is equivalent to maximize its SNR. Then the problem can be formulated as
\begin{subequations}
\begin{align}
\tt P1:~
\max_{ \beta ,{\bf w}_1,{\bf w}_2}~ & (1-\beta) |{\bf h}_2^{H} {\bf w}_2|^2\label{miso_p1.1} \\
\st \quad & \frac{(1-\beta) |{\bf h}_2^{H} {\bf w}_1|^2}{(1-\beta) |{\bf h}_2^{H} {\bf w}_2|^2 + 1} \ge \gamma_1 \label{miso_p1.2},\\
& \frac{| {\bf h}_1^{H} {\bf w}_1|^2}{|{\bf h}_1^{H} {\bf w}_2|^2 + 1} \notag\\
&~~~+ \beta g \left( |{\bf h}_2^{H} {\bf w}_1|^2 + |{\bf h}_2^{H} {\bf  w}_2|^2 \right) \ge \gamma_1 \label{miso_p1.3},\\
& 0 \le \|{\bf w}_1\|^2 + \|{\bf w}_2\|^2 \le 1, \label{miso_p1.4}\\
& 0 \le \beta \le1. \label{miso_p1.5}
\end{align}
\end{subequations}
In {\tt P1}, the objective \eqref{miso_p1.1} is to maximize the receive SNR of user $2$, which is thus called the best-effort user. Constraints \eqref{miso_p1.2} and \eqref{miso_p1.3} indicate the received SINR to decode $x_1$ should be no less than the target SINR, $\gamma_1$. Specifically, \eqref{miso_p1.2} is to ensure $x_1$ can be successfully decoded at user $2$ and \eqref{miso_p1.3} is to guarantee the QoS requirement of user $1$. The power constraint of the transmit beamformers is represented by \eqref{miso_p1.4} and the PS ratio constraint is characterized by \eqref{miso_p1.5}.

Here we remark that the nonconvex problem {\tt P1} is challenging to solve, mainly due to the coupling between the PS ratio and the quadratic terms of the beamformers in the objective function and constraints. In the following, we will first reformulate {\tt P1} by using the SDR technique and then approximately solve the reformulated problem with an SCA-based iterative algorithm.

\section{Suboptimal Beamforming Design and Power Splitting Control}
In this section, the noncovex problem {\tt P1} is firstly transformed into an equivalent problem with the well-known SDR technique \cite{Luo-2010}.
Secondly, some reformulations are conducted to transform the nonconvex terms in the obtained problem into tractable ones.
Finally, an SCA-based iterative algorithm is proposed to approximately solve the reformulated problem.~\vspace{-1mm}

\subsection{Reformulation of {\tt P1} with SDR}
We define two positive semidefinite (PSD) matrices ${\bf W}_1$  and ${\bf W}_2$ as
\begin{align}
{\bf W}_i = {\bf w}_i{\bf w}_i^H \succeq {\bf 0},\quad i = 1,2,~\vspace{-2ex}
\end{align}
and by adopting the SDR technique, \textbf{P1} can be relaxed as
\begin{subequations}
\begin{align}
\tt P2:
\max_{\beta,{\bf W}_1, {\bf W}_2}~ &~(1-\beta)\Tr({\bf H}_2 {\bf W}_2) \label{p2.1} \\
\st~~~ & \frac{(1-\beta) \Tr({\bf H}_2 {\bf W}_1)}{(1-\beta) \Tr({\bf H}_2 {\bf W}_2) + 1} \ge \gamma_1 \label{p2.2},\\
& \frac{\Tr({\bf H}_1 {\bf W}_1)}{\Tr({\bf H}_1 {\bf W}_2) + 1} \notag\\&~~~+ \beta g \Tr\left({\bf H}_2\left({\bf W}_1 + {\bf W}_2\right)\right) \ge \gamma_1 \label{p2.3},\\
& 0 \!\le\! \Tr({\bf  W}_1 \!+\! {\bf W}_2) \le 1, {\bf W}_1,{\bf W}_2 \succeq {\bf 0}, \label{p2.4}\\
& 0 \le \beta \le 1,\label{p2.5}
\end{align}
\end{subequations}
where
${\bf H}_i \triangleq {\bf h}_i{\bf h}_i^H$  for $i = 1,2$.
The constraint \eqref{p2.2} can be rewritten as a convex one, which is given by
\begin{align}
\Tr({\bf H}_2 {\bf  W}_1) - \gamma_1 \Tr({\bf H}_2 {\bf W}_2) \ge \frac{\gamma_1}{1-\beta}. \label{2.1.1}~
\end{align}
Note that the equivalence of {\tt P1} and {\tt P2} cannot be guaranteed due to the nonconvex rank-one constraints are dropped in {\tt P2}, which motivates us to prove the equivalence between the two problems.
Since the objective function of {\tt P2} and the constraints \eqref{p2.3} and \eqref{2.1.1} are all nonlinear expressions,  the conclusion in \cite{Huang-2010} cannot be applied directly to prove the rank-one optimality.
Fortunately, we still have the following proposition.~\vspace{-2mm}

\begin{Proposition}\label{rank_one}
{\tt P2} always has an optimal solution $({\bf W}_1^*,{\bf W}_2^*)$ with $\rank({\bf W}_1^*) = 1$ and $\rank({\bf W}_2^*) \leq 1$, whenever it is feasible.
\end{Proposition}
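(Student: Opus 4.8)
The plan is to start from an arbitrary optimal solution $(\beta^*, {\bf W}_1^*, {\bf W}_2^*)$ of {\tt P2} (which exists by feasibility and compactness of the constraint set), fix $\beta = \beta^*$, and then show that the resulting semidefinite program in $({\bf W}_1, {\bf W}_2)$ admits an optimal solution with the claimed rank profile. First I would observe that, with $\beta^*$ fixed, {\tt P2} reduces to a linear SDP in $({\bf W}_1, {\bf W}_2)$: the objective \eqref{p2.1} is linear in ${\bf W}_2$, the constraint \eqref{p2.2} in the form \eqref{2.1.1} is linear in $({\bf W}_1,{\bf W}_2)$, the constraint \eqref{p2.3} is linear in $({\bf W}_1,{\bf W}_2)$, and the power and PSD constraints in \eqref{p2.4} are linear/conic. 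So we have a standard SDP with two matrix variables and three scalar inequality constraints besides the PSD cones.

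The core of the argument is a rank-reduction step. Let $({\bf W}_1^\star, {\bf W}_2^\star)$ be an optimal solution of this SDP that is extremal in a suitable sense — e.g., one obtained at a vertex of the optimal face, or produced by the standard rank-reduction procedure for SDPs (Huang–Palomar / Pataki-type bounds). For such a solution the ranks satisfy $\rank({\bf W}_1^\star)^2 + \rank({\bf W}_2^\star)^2 \le$ (number of active linear constraints), and since there are at most three scalar constraints the only possibilities are $(\rank{\bf W}_1^\star, \rank{\bf W}_2^\star) \in \{(1,1),(1,0),(0,1),(0,0)\}$ up to symmetry. Then I would rule out the bad cases by feasibility: if $\rank({\bf W}_1^\star)=0$, i.e. ${\bf W}_1^\star = {\bf 0}$, then the left-hand side of \eqref{p2.2}/\eqref{2.1.1} forces $-\gamma_1\Tr({\bf H}_2{\bf W}_2^\star) \ge \gamma_1/(1-\beta^*) > 0$, which is impossible since ${\bf H}_2 \succeq {\bf 0}$ and $\gamma_1 > 0$; hence $\rank({\bf W}_1^\star) = 1$ necessarily, while $\rank({\bf W}_2^\star) \le 1$ is exactly the remaining allowed value. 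This gives the stated conclusion.

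The main obstacle is making the counting argument rigorous with exactly the right constant: a naive application of the generic SDP rank bound with two blocks and three constraints does not immediately yield $\rank{\bf W}_1 = 1$ on the nose — it yields $\rank^2{\bf W}_1 + \rank^2{\bf W}_2 \le 3$, which is what we want, but one must be careful that the objective being linear (not just the constraints) is what lets us pass to an extreme optimal solution, and that the constraint \eqref{2.1.1} rather than the fractional form \eqref{p2.2} is the one to count. An alternative, cleaner route that I would actually pursue is a direct KKT / complementary-slackness argument: write the Lagrangian of the fixed-$\beta$ SDP with multipliers for \eqref{2.1.1}, \eqref{p2.3}, the trace constraint, and the PSD cones; the stationarity condition for ${\bf W}_1$ reads (dual PSD matrix) $= (\text{combination of }{\bf H}_1,{\bf H}_2,{\bf I})$, and complementary slackness ${\bf Z}_1 {\bf W}_1^\star = {\bf 0}$ together with a rank argument on that matrix pencil pins down $\rank({\bf W}_1^\star) = 1$; the same manipulation on ${\bf W}_2$ gives $\rank({\bf W}_2^\star)\le 1$. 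The delicate point in this route is handling the case where the multiplier of \eqref{p2.3} vanishes (so ${\bf W}_2$'s stationarity involves only ${\bf H}_2$ and ${\bf I}$) versus when it is positive, and confirming that the dual matrix has the required co-rank in each case; I expect this case split to be the part that needs the most care, and I would close it using the facts that ${\bf H}_1, {\bf H}_2$ are rank-one and that $\gamma_1 > 0$ keeps ${\bf W}_1^\star \neq {\bf 0}$.
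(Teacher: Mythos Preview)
Your proposal has a genuine gap at the very first step: after fixing $\beta=\beta^*$, constraint \eqref{p2.3} is \emph{not} linear in $({\bf W}_1,{\bf W}_2)$. The first term is the ratio $\Tr({\bf H}_1{\bf W}_1)/(\Tr({\bf H}_1{\bf W}_2)+1)$, and clearing the denominator produces a bilinear cross term $\Tr({\bf H}_2({\bf W}_1+{\bf W}_2))\cdot\Tr({\bf H}_1{\bf W}_2)$. So what you call ``the fixed-$\beta$ SDP'' is not an SDP at all, and neither the Pataki/Huang--Palomar rank bound nor a standard SDP KKT argument applies to it as written. Your count of ``three scalar constraints'' and the resulting inequality $\rank^2({\bf W}_1)+\rank^2({\bf W}_2)\le 3$ therefore have no foundation.

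The paper's proof fixes exactly this issue by introducing an auxiliary variable $x\ge 0$ that splits \eqref{p2.3} into two genuinely linear constraints, $\Tr({\bf H}_1{\bf W}_1)\ge x(\Tr({\bf H}_1{\bf W}_2)+1)$ and $\beta g\,\Tr({\bf H}_2({\bf W}_1+{\bf W}_2))\ge \gamma_1-x$, and then fixes \emph{both} $\beta$ and $x$. The resulting problem {\tt P3} is a linear SDP with four scalar constraints, so the Huang--Palomar bound gives only $\rank^2({\bf W}_1^*)+\rank^2({\bf W}_2^*)\le 4$. Your observation that ${\bf W}_1^*\neq{\bf 0}$ (from \eqref{2.1.1}) is then used, leaving the possibilities $(1,1)$, $(1,0)$, or $(2,0)$; the last one is eliminated by a second application of the rank bound to the reduced three-constraint problem obtained when ${\bf W}_2^*={\bf 0}$. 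Finally, optimality over $(\beta,x)$ is recovered by a two-dimensional search. If you want to salvage the KKT route, you must either work with the nonconvex stationarity conditions of \eqref{p2.3} directly (the gradient in ${\bf W}_2$ still lands in $\mathrm{span}\{{\bf H}_1,{\bf H}_2,{\bf I}\}$, so this is plausible but needs a proper argument), or else adopt the same $x$-splitting first.
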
\vspace{-1mm}

\begin{proof}
Firstly, by introducing an auxiliary variable $x\ge 0$ to the constraint \eqref{p2.3}, it can be equivalently rewritten as
\begin{subequations}\label{Eq:16}
\begin{align}
& \Tr({\bf H}_1 {\bf W}_1) \ge  x \Tr({\bf H}_1 {\bf W}_2) + x, \label{Eq:16b}\\
& \beta g \Tr\left({\bf H}_2\left({\bf W}_1 + {\bf W}_2\right)\right) \ge \gamma_1 - x.\label{Eq:16a}
\end{align}
\end{subequations}
Then, for any given $\beta$ and $x$, {\tt P2} is degraded into the following linear SDP problem
\begin{subequations}
\begin{align}
\tt P3:
\max_{{\bf W}_1, {\bf W}_2} &~ (1-\beta)\Tr({\bf H}_2 {\bf W}_2)  \\
\st~ & \Tr({\bf H}_2 {\bf W}_1)- \gamma_1 \Tr({\bf H}_2 {\bf W}_2) \ge \frac{\gamma_1}{1\!-\!\beta}, \label{rank1.2}\\
& \Tr({\bf H}_1 {\bf W}_1) - x \Tr({\bf H}_1 {\bf W}_2) \ge x \label{rank1.2}\\
& \Tr({\bf H}_2{\bf W}_1) + \Tr({\bf H}_2{\bf W}_2) \ge \frac{\gamma_1 - x}{\beta g} ,\\
& \Tr({\bf  W}_1) + \Tr({\bf W}_2) \le 1,\\
& {\bf W}_1\succeq {\bf 0}, {\bf W}_2\succeq {\bf 0}.
\end{align}
\end{subequations}
Assume that {\tt P3} is feasible, and thus it is also dual feasible. Then, according to Theorem 3.2 in \cite{Huang-2010}, we conclude that {\tt P3} always has an optimal solution (${\bf W}_1^*$, ${\bf W}_2^*$) and satisfy
\begin{align}
\rank^2({\bf W}_1^*) + \rank^2({\bf W}_2^*) \le 4.
\end{align}
From \eqref{rank1.2}, we have ${\bf W}_1^* \succeq {\bf 0}$ and ${\bf W}_1^* \neq {\bf 0}$ if {\tt P3} is feasible, thus $\rank({\bf W}_1^*) \ge 1$, and then $\rank({\bf W}_2^*) \le 1$.
\begin{itemize}
\item If $\rank({\bf W}_2^*) = 1$, then we have $\rank({\bf W}_1^*) = \rank({\bf W}_2^*) = 1$.
\item If $\rank({\bf W}_2^*) = 0$, then ${\bf W}_2^* = 0$ and {\tt P3} is equivalent to
\begin{subequations}
\begin{align}
\max_{{\bf W}_1\succeq {\bf 0}}~ & ~0  \\
\st~ & \Tr({\bf H}_2 {\bf W}_1) \ge \max \left\{\frac{\gamma_1}{(1-\beta)}, \frac{\gamma_1 - x}{\beta g} \right\}, \\
& \Tr({\bf H}_1 {\bf W}_1) \ge x,\\
& \Tr({\bf  W}_1) \le 1.
\end{align}
\end{subequations}
Again, by Theorem 3.2 in \cite{Huang-2010}, we can conclude that there exits an optimal solution ${\bf W}_1^*$ which satisfies $\rank^2({\bf W}_1^*) \le 3$, and thus $\rank({\bf W}_1^*) = 1$.
\end{itemize}
Moreover, the optimal $\beta$ and $x$ in {\tt P3} can be found through the two-dimensional exhaustive search method. Thus the rank-one optimality of {\tt P2}, i.e., the SDR tightness, is proved.\hfill $\blacksquare$
\end{proof} \vspace{-2mm}
\begin{rem}
  The special case with ${\bf W}_2 = {\bf 0}$ indicates that the system can just satisfy the QoS requirement of user 1. When the QoS requirement of user 1 is high or its channel condition is weak, it tends to sacrifice the data rate of user $2$ to guarantee its own QoS requirement.
\end{rem}
\begin{rem}
The optimal solution to {\tt P3} is not always guaranteed to be rank-one. Hence,  if $\rank({\bf W}_1^*)=\rank({\bf W}_2^*)=1$, then the optimal beamforming vectors ${\bf w}_1^*$ and ${\bf w}_2^*$ to {\tt P1} can be obtained from ${\bf W}_1^*$ and ${\bf W}_2^*$ by eigen-decomposition, otherwise, Gaussian randomization procedure can be utilized to yield a suboptimal solution \cite{Luo-2010}.
\end{rem}

Due to the high complexity of the two-dimensional exhaustive search method, it motivates us to develop a more efficient algorithm to solve {\tt P2}.
Notice that, although all beamformer related quadratic terms in {\tt P1} are linearized in the semidefinite cone,
{\tt P2} is still mathematically intractable, as explained in the following.
By verifying the Hessian of the objective function, it can be readily verified that this is a nonconvex function (in fact, a sloping saddle surface function). Furthermore, \eqref{p2.3} is also nonconvex due to the coupling  among the variables $\beta$, ${\bf{W}_1}$ and ${\bf{W}_2}$.
In the next two subsections, we will first reformulate the nonconvex 
expressions \eqref{p2.1} and \eqref{p2.3} in {\tt P2} and then develop an SCA-based algorithm to efficiently obtain a stationary point of {\tt P2}.


\subsection{Reformulation of Nonconvex Constraints in {\tt P2}}
By epigraph reformulation \cite{Boyd-2009}, the objective function \eqref{p2.1} can be equivalently rewritten as
\begin{subequations}
\begin{align}
\max_{u,v,\beta,{\bf W}_2} \quad & u \label{obj0}\\
\st ~\quad & v^2 \ge u, \label{obj1}\\
&\begin{bmatrix}
     (1-\beta)   & v \\
     v & \Tr\left({\bf H}_2{\bf  W}_2\right)
\end{bmatrix}
\succeq 0, \label{obj2}
\end{align}
\end{subequations}
which consists of a linear objective function \eqref{obj0}, a nonconvex quadratic inequality constraint \eqref{obj1} and a convex linear matrix inequality (LMI) constraint \eqref{obj2}.

{Recall that \eqref{p2.3} is equivalent to \eqref{Eq:16}, where}
\eqref{Eq:16b} is a nonconvex constraint and it will be dealt with in the next subsection, while
\eqref{Eq:16a} can be transformed into a nonconvex quadratic constraint and an LMI constraint as below
\begin{subequations}
\begin{align}
t^2 &\ge \gamma_1 - x, \label{2.2.2.1}\\
\begin{bmatrix}
     g \beta  & t \\
     t & \Tr\left({\bf H}_2\left({\bf W}_1 + {\bf W}_2\right)\right)
\end{bmatrix}
&\succeq 0. \label{2.2.2.2}
\end{align}
\end{subequations}

Until now, {\tt P2} becomes
\begin{subequations}
\begin{align}
{\tt P4}:
\max_{u,v,t,x,{\bf W}_1, {\bf W}_2,\beta} & \quad u \label{p3.1} \\
\st~~\quad
&\!\!\!\!\!\! v^2 \ge u,  \label{p3.2}\\
&\!\!\!\!\!\! t^2 \ge \gamma_1 - x,  \label{p3.3}\\
&\!\!\!\!\!\! \Tr({\bf H}_1 {\bf  W}_1) - x \ge \Tr({\bf H}_1 {\bf  W}_2 )x, \label{p3.4}\\
&\!\!\!\!\!\! \eqref{p2.4}, \eqref{p2.5}, \eqref{2.1.1}, \eqref{obj2}~{\rm and}~\eqref{2.2.2.2}.\nonumber
\end{align}
\end{subequations}
Compared to {\tt P2}, {\tt P4} explicitly reveals the fundamental difficulties of {\tt P1} that lie in the nonconvex constraints \eqref{p3.2}-\eqref{p3.4}.
In the following, we will use an iterative approach to approximate this problem, in which constraints \eqref{p3.2} and \eqref{p3.3} are replaced by their first-order Taylor expansions and constraint \eqref{p3.4} is approximated by using the arithmetic-geometric mean (AGM) inequality.


\subsection{SCA-based Algorithm for {\tt P4}}

The key idea of the SCA method is to iteratively approximate the nonconvex problem by convex ones \cite{Li-2013-tsp}. Performing the first-order Taylor approximation, the nonconvex constraints \eqref{p3.2} and \eqref{p3.3} can be approximated as below
\begin{subequations}
\begin{align}
 2 v^{(n)} v - (v^{(n)})^2 &\ge u, \label{p3.2.1}\\
 2 t^{(n)} t - (t^{(n)})^2 &\ge \gamma_1 - x, \label{p3.3.1}
\end{align}
\end{subequations}
where $v^{(n)}$ and $t^{(n)}$ denote the value of variable $v$ and $t$ at the $n$-th iteration. 

For constraint \eqref{p3.4}, the AGM inequality can be used to yield approximate constraint. For any nonnegative variables $x,y,z$, the AGM inequality-based approximation of the nonconvex expression $xy \le z$, which has the same form as \eqref{p3.4}, can be described as
\begin{align}
2xy \le (ax)^2 + (y/a)^2 \le 2z,
\end{align}
where the first equality holds if and only if $a = \sqrt{y/x}$.
Therefore, the constraint \eqref{p3.4} can be approximated by the following convex constraint
\begin{align}
&(a^{(n)}x)^2 \!+\! (\Tr({\bf H}_1 {\bf W}_2 )/a^{(n)})^2 \!\le\! 2\Tr({\bf H}_1 {\bf W}_1) \!-\! 2x,
\end{align}
where $a^{(n)}$ is the value of $a$ at the $n$-th iteration and can be updated by
\begin{align}\label{agm}
a^{(n)} &= \sqrt{\left(\Tr({\bf H}_1 {\bf W}_2 )\right)^{(n-1)}/x^{(n-1)}}.
\end{align}

Hence, the problem that needs to be solved during the $n$-th iteration is given by
\begin{subequations}
\begin{align}
{\tt P5}:
\max_{u,v,t,x,\beta,{\bf W}_1, {\bf W}_2} \quad &u \label{p4.1}\\
\st~~~~~~~ ~
&\!\!\!\!\!\!\!\! 2 v^{(n)} v - (v^{(n)})^2 \ge u, \label{p4.2}\\
&\!\!\!\!\!\!\!\! 2 t^{(n)} t - (t^{(n)})^2 \ge \gamma_1 - x,   \label{p4.3}\\
&\!\!\!\!\!\!\!\! (a^{(n)}x)^2 +(\Tr({\bf H}_1 {\bf W}_2 )/a^{(n)})^2\notag\\&~~~~~\le 2\Tr({\bf H}_1 {\bf W}_1) - 2x, \label{p4.5}\\
&\!\!\!\!\!\!\!\! \eqref{p2.4}, \eqref{p2.5}, \eqref{2.1.1}, \eqref{obj2}~{\rm and}~ \eqref{2.2.2.2}, \nonumber
\end{align}
\end{subequations}
which is a convex problem. Thus it can be efficiently handled with the off-the-shelf convex solver, e.g., \texttt{CVX} \cite{cvx}.

The SCA-based algorithm is outlined in Algorithm \ref{Alg1}. The main part of this algorithm is an iterative procedure, which starts by solving {\tt P4} to get an approximated solution with the initial value of $u_{0}$, $v_{0}$, $t_{0}$, $a_{0}$ and update the value of $u^{(n)}$, $v^{(n)}$, $t^{(n)}$, $a^{(n)}$ after each iteration according to the obtained solution. The iterative procedure will repeat until the gap of the objective function between two successive iterations is below a threshold $\epsilon_1$. In fact, we can draw the following proposition.

\begin{Proposition}\label{convergence}
The proposed algorithm can continuously decrease the rate gap between two successive iterations and guarantee the generated rate sequence converges to at least a stationary point whenever {\tt P4} is feasible.
\end{Proposition}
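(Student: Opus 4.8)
The plan is to recognize Algorithm~\ref{Alg1} as an instance of the inner convex approximation (successive convex approximation) framework \cite{Li-2013-tsp}, and then verify the three structural properties that make such a scheme converge: (P-i) the feasible set of each subproblem {\tt P5} is contained in that of {\tt P4}; (P-ii) the iterate produced at step $n-1$ is feasible for the $n$-th instance of {\tt P5}; and (P-iii) the surrogate constraint functions match the original ones in both value and gradient at the current linearization point. To check (P-i), note that $v^2$ and $t^2$ are convex, so their first-order Taylor expansions underestimate them: $2v^{(n)}v-(v^{(n)})^2\le v^2$ and $2t^{(n)}t-(t^{(n)})^2\le t^2$, whence \eqref{p4.2} implies \eqref{p3.2} and \eqref{p4.3} implies \eqref{p3.3}. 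For the bilinear constraint, the AM--GM inequality gives $2x\Tr({\bf H}_1{\bf W}_2)\le(a^{(n)}x)^2+(\Tr({\bf H}_1{\bf W}_2)/a^{(n)})^2$ for every $a^{(n)}>0$, so \eqref{p4.5} implies \eqref{p3.4}. The constraints \eqref{p2.4}, \eqref{p2.5}, \eqref{2.1.1}, \eqref{obj2}, \eqref{2.2.2.2} are common to both problems, so any {\tt P5}-feasible point is {\tt P4}-feasible; in particular every value returned by the algorithm is an achievable objective value of {\tt P4}.

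For (P-ii) and (P-iii): evaluating the surrogate in \eqref{p4.2} at $v=v^{(n)}$ returns $(v^{(n)})^2$, which is $\ge u^{(n)}$ because the iterate of step $n-1$ satisfied \eqref{p3.2}; the same holds for \eqref{p4.3}; and by the update rule \eqref{agm} the AM--GM inequality is tight at the previous iterate, so \eqref{p4.5} collapses to \eqref{p3.4} there. Hence the step-$(n-1)$ iterate is feasible for the $n$-th {\tt P5}. Since a first-order Taylor expansion agrees with the function and its gradient at the expansion point, and since equality in AM--GM at $a^{(n)}=\sqrt{\Tr({\bf H}_1{\bf W}_2^{(n-1)})/x^{(n-1)}}$ also forces the gradients of $(a^{(n)}x)^2+(\Tr({\bf H}_1{\bf W}_2)/a^{(n)})^2$ and of $2x\Tr({\bf H}_1{\bf W}_2)$ to coincide there, properties (P-ii)--(P-iii) follow.

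By (P-ii), the optimal value $u^{(n)}$ of the $n$-th subproblem is no smaller than the objective value $u^{(n-1)}$ of the previous (feasible) iterate, so $\{u^{(n)}\}$ is nondecreasing and the inter-iteration rate gap $u^{(n)}-u^{(n-1)}\ge0$. Boundedness comes from {\tt P4}-feasibility: the LMI \eqref{obj2} (Schur complement, using $\beta\le1$) together with \eqref{p3.2} gives $u\le v^2\le(1-\beta)\Tr({\bf H}_2{\bf W}_2)\le\Tr({\bf H}_2{\bf W}_2)\le\|{\bf h}_2\|^2$ by \eqref{p2.4}. A bounded monotone sequence converges, so the rate gap between successive iterations tends to zero, which is the monotone-improvement claim. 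Finally, since {\tt P5} is a valid inner approximation that is exact to first order at the current iterate, one invokes the convergence theorem for this framework: using compactness of the {\tt P4}-feasible set (the $\beta$- and trace-constraints bound $\beta$ and ${\bf W}_1,{\bf W}_2$, and $u,v,t,x$ are then bounded as well), the iterate sequence has limit points, and every such limit point is a stationary (KKT) point of {\tt P4}.

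I expect the last step to be the main obstacle. One must write the KKT system of {\tt P5} at iteration $n$, pass to the limit along a convergent subsequence, and show it becomes the KKT system of {\tt P4} at the limit point. This requires continuity of the update maps $v^{(n)},t^{(n)},a^{(n)}\mapsto$ (next subproblem data) — in particular the AGM multiplier $a^{(n)}$ must converge, which needs $x^{(n)}$ bounded away from $0$ (or a separate treatment of the degenerate case $x^{\star}=0$, where \eqref{p3.4} becomes inactive) — plus a constraint qualification (e.g. Slater's condition for {\tt P4}, which holds generically) so that the limiting first-order conditions are exactly the KKT conditions and the multiplier sequences stay bounded. Once these technical points are in place, matching gradients via (P-iii) closes the argument.
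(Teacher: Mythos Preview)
Your proposal is correct and follows exactly the route the paper takes: the paper's own proof simply states that the argument is ``similar to \cite{Li-2013-tsp}'' and omits the details, so you are in fact supplying the verification of the inner-approximation conditions that the paper defers to that reference. Your monotone-improvement and boundedness arguments, together with the first-order exactness of the Taylor and AGM surrogates at the current iterate, are precisely what is needed to invoke the SCA convergence theorem cited there.
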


\begin{proof}
The proof of Proposition \ref{convergence} is similar to \cite{Li-2013-tsp}, and we omit it here.
\end{proof}

\begin{algorithm}[!t]\small
\caption{~SCA Method for Solving \tt P4 }\label{Alg1}
\begin{algorithmic}[1]
 \STATE {{\bf Initialization:} Set $n=0$, $u_0 = -\infty$, $v_0 = 1$, $t_0 = 1$, $a_0 = 1$, $\Delta = 1$ and the tolerance $\epsilon = 10^{-4}$.}
 \WHILE{$\Delta \ge \epsilon$}
 \STATE {Update $\beta^{(n)}$, ${\bf W}_1^{(n)}$ and ${\bf W}_2^{(n)}$ by solving {\tt P5}.}\\
 \STATE {Update $a^{(n)}$ based on \eqref{agm}.}\\
 \STATE {Update $\Delta = |u^{(n)} - u^{(n - 1)}|$.}\\
 \STATE {Set $n \leftarrow n +1$.}
 \ENDWHILE
 \ENSURE {$\beta^{(n)}$, ${\bf W}_{1}^{(n)}$ and ${\bf W}_{2}^{(n)}$.}
\end{algorithmic}
\end{algorithm}

\begin{rem}
It is worth to pointing out that the objective function (14a) can also be alternatively reformulated as
   \begin{subequations}
   \begin{align}
   \!\!\!\!\!\!\!\max ~&~t\\
   \st ~&\left(\theta(1-\beta) + \Tr({\bf H}_2{\bf W}_2)\right)^2 - \theta^2(1-\beta)^2 \notag \\
   &~~~~~~~~~~~~~~~~~~~~~~~~~~~ - \left(\Tr({\bf H}_2{\bf W}_2)\right)^2 \ge 2\theta t,
   \end{align}\label{DoQ}
   \end{subequations}
\!\!where the preconditioning parameter $\theta\in\Rset^+$ is chosen to balance the terms $1-\beta$ and $\Tr({\bf H}_2 {\bf W}_2)$ such that potential conditioning of the problem can be improved with more favorable properties for the solution.
   The relevant constraints can also be expressed with the form of difference of quadratic functions. Then the idea of SCA can be implemented over \eqref{DoQ}, and it can also converge to a stationary point whenever {\tt P4} is feasible \cite{Li-2013-tsp,messiam-2013} .

\end{rem}

In the previous sections, we focus on the system design of the cooperative SWIPT NOMA scheme in MISO cases. However, from the perspective of practical applications, e.g.,  in the IoT scenario, the devices are typically equipped with a single antenna.  Motivated by this, we will further consider the cooperative SWIPT NOMA scheme under the SISO cases. The resultant problem is a special case of {\tt P1} with $N_t = 1$, hence it can be effectively solved by Algorithm \ref{Alg1}. However, only a stationary point can be attained. In the following section, we will present an iterative algorithm, through which the \textit{global optimality} of the obtained solution can be guaranteed. In addition, the optimum can be written as a \textit{closed-form} expression in each iteration.
\vspace{-2mm}

\section{Optimal Transmission Protocol Design for SISO Cases}
In this section, we consider the cooperative SWIPT NOMA transmission protocol design in SISO cases, in which all the nodes (i.e., the BS, user $1$ and user $2$) are equipped with a single antenna.
\vspace{-2mm}

\subsection{Problem Formulation in SISO Cases}
In SISO cases, the beamforming design in {\tt P1} is degraded  into a power control problem. The beamforming vector ${\bf w}_1$ is replaced by a power control variable $\alpha$, representing the power fraction for message $x_1$. Then, the optimization problem in SISO cases can be formulated as
\begin{subequations}
\begin{align}
\tt P6: \quad
\max_{\alpha,\,\beta} \quad &(1-\alpha) (1-\beta)h_2    \label{P5.1}\\
\st \quad & \frac{(1-\beta)\alpha h_2}{(1-\beta)(1-\alpha)h_2 + 1} \ge \gamma_1, \label{P5.2}\\
& \frac{\alpha h_1}{(1-\alpha) h_1 + 1} + \beta h_2 g \ge \gamma_1, \label{P5.3}\\
& 0 \le \alpha \le 1, ~0 \le \beta \le 1, \label{P5.5}
\end{align}
\end{subequations}
where $h_1\in \Rset^+$, $h_2\in\Rset^+$ and $g\in\Rset^+$ are normalized channel gains, respectively.

As mentioned before, {\tt P6} can be effectively solved with Algorithm \ref{Alg1} to yield a stationary point. To obtain the global optimum, in the next subsection, we will present a golden section search (GSS) based iterative algorithm, by which the global optimum of {\tt P6} admits a \textit{semiclosed-form} expression.
\vspace{-2mm}


\subsection{Global Optimal Solution to {\tt P6}}
Recall that the PS ratio $\beta$ represents the amount of signal power for EH of user 2. So, if all the power received by user $2$ is used for EH, i.e., $\beta = 1$, then the user $2$ has a zero data rate and cannot help user $1$.
Correspondingly, if $\beta = 0$, namely, no EH at user 2 and all its received power is used for information decoding, however, it has also no energy to help user 1 and then the QoS requirement of user $1$ may not be guaranteed.
So, the optimal value of $\beta$ for {\tt P6} lies in $[0,~1)$. In fact, the feasible set of $\beta$ is characterized by the following lemma.\vspace{-2mm}
\begin{Proposition} \label{feasibility_set}
The feasible set of $\beta$ is $\left[\beta_{\rm min},\beta_{\rm max}\right]$, where $\beta_{\rm min} = \frac{(\gamma_1 - h_1)^+}{h_2g}$ and $\beta_{\rm max} = 1-\frac{\gamma_1}{h_2}$.
\end{Proposition}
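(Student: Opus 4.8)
The plan is to characterize the feasible set of $\beta$ by eliminating the power-control variable $\alpha$ from the two SINR constraints \eqref{P5.2} and \eqref{P5.3}, and then identifying the range of $\beta$ for which a valid $\alpha\in[0,1]$ exists. First I would rewrite \eqref{P5.2}: multiplying out the fraction and collecting the $\alpha$ terms gives $(1-\beta)\alpha h_2 \ge \gamma_1\big((1-\beta)(1-\alpha)h_2 + 1\big)$, i.e. $(1-\beta)h_2\,\alpha(1+\gamma_1) \ge \gamma_1\big((1-\beta)h_2 + 1\big)$, which yields a lower bound on $\alpha$ of the form $\alpha \ge \alpha_{\min}(\beta)$ with $\alpha_{\min}(\beta) = \frac{\gamma_1((1-\beta)h_2+1)}{(1+\gamma_1)(1-\beta)h_2}$. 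For this to be achievable we need $\alpha_{\min}(\beta)\le 1$; a short manipulation shows this is equivalent to $(1-\beta)h_2 \ge \gamma_1$, i.e. $\beta \le 1 - \frac{\gamma_1}{h_2} = \beta_{\max}$. (One should also note $\beta_{\max}<1$, and that the boundary $\beta=\beta_{\max}$ forces $\alpha=1$, consistent with Remark 1 / the ${\bf W}_2={\bf 0}$ case.)

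Next I would treat \eqref{P5.3}. Since the map $\alpha\mapsto \frac{\alpha h_1}{(1-\alpha)h_1+1}$ is increasing on $[0,1]$, constraint \eqref{P5.3} also imposes a lower bound on $\alpha$; but more useful here is to observe that \eqref{P5.3} can be satisfied by taking $\alpha$ as large as allowed. The key point is the interaction with the objective: since the objective $(1-\alpha)(1-\beta)h_2$ is decreasing in $\alpha$, at optimality $\alpha$ is taken as small as the constraints permit. Rearranging \eqref{P5.3} and isolating $\beta$ gives $\beta h_2 g \ge \gamma_1 - \frac{\alpha h_1}{(1-\alpha)h_1+1}$; the right-hand side is largest (hardest to meet) when $\alpha$ is smallest, and when $\alpha=0$ it equals $\gamma_1 - 0$... but actually the binding case couples both constraints. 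The cleanest route: set $\alpha = \alpha_{\min}(\beta)$ from \eqref{P5.2} (any feasible point can be pushed to this value without violating \eqref{P5.3}, since decreasing $\alpha$ only helps the objective but hurts \eqref{P5.3}, so \eqref{P5.3} must still be checked), substitute into \eqref{P5.3}, and read off the condition on $\beta$. Carrying this through, \eqref{P5.3} becomes $\beta h_2 g \ge \gamma_1 - \frac{\alpha_{\min}(\beta) h_1}{(1-\alpha_{\min}(\beta))h_1 + 1}$. Using $\beta\le\beta_{\max}$ one checks that $\frac{\alpha_{\min}(\beta)h_1}{(1-\alpha_{\min}(\beta))h_1+1}\ge\ldots$; the algebra should collapse to $\beta h_2 g \ge \gamma_1 - h_1$, giving $\beta \ge \frac{\gamma_1 - h_1}{h_2 g}$, and combined with $\beta\ge 0$ this is $\beta \ge \frac{(\gamma_1-h_1)^+}{h_2 g} = \beta_{\min}$.

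Finally I would assemble the two bounds: $\beta$ is feasible for {\tt P6} if and only if there exists $\alpha\in[0,1]$ satisfying \eqref{P5.2}--\eqref{P5.3}, which by the above holds exactly when $\beta_{\min}\le\beta\le\beta_{\max}$, and one should verify the interval is nonempty precisely when the problem is feasible (this matches Proposition \ref{rank_one}'s feasibility caveat). I would also double-check the degenerate subcase $\gamma_1\le h_1$, where user~1's direct link already meets its QoS target and $\beta_{\min}=0$, so even $\beta=0$ can be feasible — consistent with the discussion preceding the proposition, which noted $\beta=0$ may fail only when the direct link is too weak.

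I expect the main obstacle to be the substitution step: showing rigorously that substituting $\alpha=\alpha_{\min}(\beta)$ into \eqref{P5.3} gives the \emph{exact} feasibility condition (not just a sufficient or necessary one) requires a monotonicity argument — namely that \eqref{P5.2} forces $\alpha\ge\alpha_{\min}(\beta)$ while \eqref{P5.3} is easiest to satisfy at the largest $\alpha$, so the tightest simultaneous feasibility is governed by $\alpha=\alpha_{\min}(\beta)$ against \eqref{P5.3} — together with the somewhat tedious algebraic simplification that makes the $h_1$ terms telescope down to the clean expression $\beta_{\min}=\frac{(\gamma_1-h_1)^+}{h_2 g}$. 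Handling the $(\cdot)^+$ truncation and the boundary cases $\beta=\beta_{\min}$, $\beta=\beta_{\max}$ carefully is where the bookkeeping lives.
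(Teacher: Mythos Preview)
Your derivation of $\beta_{\max}$ is correct and matches the paper: requiring $\alpha_{\min}(\beta)\le 1$ from \eqref{P5.2} is equivalent to checking \eqref{P5.2} at $\alpha=1$, which gives $(1-\beta)h_2\ge\gamma_1$.

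The gap is in your treatment of $\beta_{\min}$. You propose to substitute $\alpha=\alpha_{\min}(\beta)$ into \eqref{P5.3} and assert the algebra collapses to $\beta h_2 g\ge\gamma_1-h_1$. It does not, and the logic behind the substitution is backwards. Both \eqref{P5.2} and \eqref{P5.3} are \emph{lower} bounds on $\alpha$ (each left-hand side is increasing in $\alpha$), so a given $\beta$ is feasible iff the larger of the two lower bounds is at most $1$ --- equivalently, iff both constraints hold at $\alpha=1$. Evaluating \eqref{P5.3} at $\alpha=\alpha_{\min}(\beta)$ instead imposes a strictly stronger requirement and would produce a smaller interval than the one claimed. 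Concretely, take $h_1=1$, $h_2=10$, $g=1$, $\gamma_1=2$: the proposition gives $\beta_{\min}=0.1$, but at $\beta=0.1$ one computes $\alpha_{\min}(0.1)=20/27$ and the left side of \eqref{P5.3} evaluates to $20/34+1\approx 1.59<2$, so your substituted condition fails even though $\beta=0.1$ is feasible (via $\alpha=1$). Your own monotonicity remark --- ``\eqref{P5.3} is easiest to satisfy at the largest $\alpha$'' --- points the right way but you then draw the opposite conclusion from it.

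The source of the confusion is the objective. Proposition~\ref{feasibility_set} is purely about feasibility; the fact that the objective prefers small $\alpha$ is irrelevant here (it matters later, for the closed form \eqref{alpha}, not for the admissible range of $\beta$). The paper's proof is the straightforward one you should adopt: solve \eqref{P5.2} and \eqref{P5.3} for $\beta$ as functions of $\alpha$, observe that both the resulting upper bound on $\beta$ (from \eqref{P5.2}) and lower bound on $\beta$ (from \eqref{P5.3}) loosen monotonically as $\alpha$ increases, take $\alpha=1$ to obtain $\frac{\gamma_1-h_1}{h_2 g}\le\beta\le 1-\frac{\gamma_1}{h_2}$, and intersect with $\beta\ge 0$ to get the $(\cdot)^+$.
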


\begin{proof}
\eqref{P5.2} implies that the message of user $1$ can be successfully decoded by user $2$, while \eqref{P5.3} guarantees the QoS target of user $1$. Hence, for any $\alpha \in [0,1]$, \eqref{P5.2} and \eqref{P5.3} require that
\begin{align}
\frac{\gamma_1 - \frac{\alpha h_1}{(1-\alpha)h_1 + 1}}{h_2 g} \le \beta \le 1 - \frac{\gamma_1}{h_2 - (\gamma_1+1)(1-\alpha)h_2}. \label{beta1}
\end{align}

Both \eqref{P5.2} and \eqref{P5.3} prefer a larger $\alpha$, consequently, taking $\alpha = 1$, \eqref{beta1} becomes
\begin{align}
\frac{\gamma_1 - h_1}{h_2g} \le \beta \le 1-\frac{\gamma_1}{h_2}. \label{beta2}
\end{align}
Recall that  $0 \le \beta < 1$, the feasible set of $\beta$ is thus given by $\left[\frac{(\gamma_1 - h_1)^+}{h_2g}, 1-\frac{\gamma_1}{h_2} \right]$,  where $(x)^+ = \max(x,0)$.  This completes the proof.
\end{proof}

From the proof of Lemma {\ref{feasibility_set}}, some interesting observations can be obtained.
\begin{rem} \label{set}
The achievable data rate for user $2$, denoted by $R_2$, is definitely $0$ when $\beta = \beta_{\max}$, and $R_2 \ge 0$ if $\beta = \beta_{\min}$. Specifically, $R_2 > 0$ if $h_1>\gamma_1 $ and $0$ otherwise, which indicates that the user 2 can achieve a positive data rate only when the normalized channel gain, $h_1$, is large enough to support the required \SINR, $\gamma_1$.
\end{rem}

For the convenience of illustration, we transform {\tt P6} into a bilevel programming problem, with the upper-level variable being $\beta$, which is given by
\begin{subequations}
\begin{align}
\max_{\beta \in [\beta_{\rm min},\beta_{\rm max}]} h(\beta) \!\triangleq\! &\max_{0 \le \alpha \le 1}~(1-\alpha)(1-\beta)h_2 \label{p7.1}\\
&\st~\frac{(1-\beta)\alpha h_2}{(1-\beta)(1-\alpha)h_2 + 1} \ge \gamma_1, \label{p7.2}\\
&~~~~~~\frac{\alpha h_1}{(1-\alpha) h_1 + 1} \!+\! \beta h_2 g \ge \gamma_1, \label{p7.3}
\end{align}
\end{subequations}
where $h(\beta)$ is the inner optimization problem with respect to $\alpha$. Now, we can observe that, for any given $\beta\in \left[\beta_{\rm min},\beta_{\rm max}\right]$, \eqref{p7.2} and \eqref{p7.3} can be rewritten respectively as
\begin{subequations}
\begin{align}
\alpha &\ge \frac{\gamma_1 ((1-\beta)h_2+1)}{(1+\gamma_1)(1-\beta)h_2},\\
(\gamma_1+1-\beta h_2 g)h_1\alpha &\ge (\gamma_1-\beta h_2 g)(h_1+1).
\end{align}
\end{subequations}
So, the optimal $\alpha$ is given by the following \textit{closed-form} expression:
\begin{align}
\alpha(\beta) =
\begin{cases}
{\rm min} \{A,1\}, \quad \quad &{\rm if}~ \beta \ge \frac{\gamma_1}{h_2g},\\
{\rm min} \{ {\rm max}\{A,B\},1\}, &{\rm otherwise},\\
\end{cases} \label{alpha}
\end{align}
in which $A = \frac{\gamma_1 ((1-\beta)h_2+1)}{(1+\gamma_1)(1-\beta)h_2}$, $B = \frac{(\gamma_1-\beta h_2 g)(h_1+1)}{(\gamma_1+1-\beta h_2 g)h_1}$.

From \eqref{alpha}, we can get the following useful observation.
\begin{rem} \label{rem2}
If $\beta \ge \gamma_1/(h_2 g)$, then \eqref{p7.2} must hold with equality, otherwise, \eqref{p7.3} holds with equality.
\end{rem}

With given $\beta$, the optimal $\alpha$ and thus $h(\beta)$ is determined through \eqref{alpha}. Hence, the key is to get the optimal $\beta$ in an efficient way. To this end, we first prove that $h(\beta)$ possesses the following property, which can be used to derive an iterative algorithm to obtain the optimal $\beta$.

\begin{Proposition} \label{lem_unimodal}
$h(\beta)$ is strictly unimodal with respect to $\beta \in [\beta_{\rm min},\beta_{\rm max}]$.
\end{Proposition}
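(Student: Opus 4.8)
The plan is to collapse $h(\beta)$ to an explicit one‑variable function by plugging in the closed form $\alpha(\beta)$ from \eqref{alpha} together with Remark~\ref{rem2}, and then to establish strict unimodality by a piecewise monotonicity analysis. First I would observe that on $[\beta_{\rm min},\beta_{\rm max}]$ the quantity $A(\beta)$ is strictly increasing in $\beta$ while $B(\beta)$ is strictly decreasing, with $B=0<A$ at $\beta=\gamma_1/(h_2g)$; hence there is a single threshold $\tilde\beta\le \gamma_1/(h_2g)$ such that $\alpha(\beta)=B(\beta)$ (constraint \eqref{p7.3} active) on $I_B:=[\beta_{\rm min},\tilde\beta]$ and $\alpha(\beta)=A(\beta)$ (constraint \eqref{p7.2} active) on $I_A:=[\tilde\beta,\beta_{\rm max}]$. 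Since $A,B\le 1$ on the feasible set, the cap at $1$ in \eqref{alpha} is never active in the interior, so $\alpha(\beta)=\max\{A(\beta),B(\beta)\}$ there and $h$ is continuous across $\tilde\beta$. (If $h_1\ge h_2$ then $I_B$ degenerates to $\{\beta_{\rm min}\}$; likewise $I_A$ may be empty. These reduce to cases handled below.)

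Next I would treat the two branches separately. On $I_A$, substituting $\alpha=A(\beta)$ and simplifying yields the affine form $h(\beta)=\big((1-\beta)h_2-\gamma_1\big)/(1+\gamma_1)$, which is strictly decreasing and vanishes at $\beta_{\rm max}$. On $I_B$, substituting $\alpha=B(\beta)$ and simplifying yields
\[
h(\beta)=\frac{h_2}{h_1}\cdot\frac{(h_1-\gamma_1+\beta h_2g)(1-\beta)}{\gamma_1+1-\beta h_2g},
\]
a rational function with strictly positive denominator on $I_B$. Differentiating, I would show that the sign of $h'(\beta)$ on $I_B$ equals the sign of the upward‑opening quadratic $q(\beta)=(h_2g)^2\beta^2-2h_2g(\gamma_1+1)\beta+K$, where $K=h_2g(1+h_1)+(\gamma_1-h_1)(\gamma_1+1)$ and the discriminant of $q$ has the same sign as $\gamma_1+1-h_2g$. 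Thus $q$ is either everywhere nonnegative (when $h_2g\ge\gamma_1+1$), or has two real roots $\beta_-\le\beta_+$ with $\beta_+\ge\frac{\gamma_1+1}{h_2g}>\frac{\gamma_1}{h_2g}\ge\tilde\beta$; in either case, on $I_B\subseteq[\beta_{\rm min},\gamma_1/(h_2g)]$ the quadratic $q$ changes sign at most once, and necessarily from $+$ to $-$. Hence $h$ is strictly increasing and then strictly decreasing on $I_B$, with one of the two phases possibly empty. (When $\gamma_1>h_1$ one has $h(\beta_{\rm min})=0$, which, together with $h\ge 0$ on the feasible set from Lemma~\ref{feasibility_set}, is consistent with $h$ being nondecreasing at $\beta_{\rm min}^{+}$.)

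Finally I would glue the pieces: $h$ is continuous on $[\beta_{\rm min},\beta_{\rm max}]$, strictly increasing‑then‑decreasing on $I_B$, and strictly decreasing on the subsequent interval $I_A$; since $h$ never increases on $I_A$, there is a unique $\beta^\star\in I_B$ with $h$ strictly increasing on $[\beta_{\rm min},\beta^\star]$ and strictly decreasing on $[\beta^\star,\beta_{\rm max}]$, i.e.\ $h$ is strictly unimodal. The hard part will be the $I_B$ analysis: carrying out the differentiation of the rational branch, reducing the sign of $h'$ to that of the quadratic $q$, and—most delicately—certifying that $q$ can switch sign at most once on $I_B$ via the discriminant dichotomy and by locating the larger root $\beta_+$ strictly to the right of $\gamma_1/(h_2g)$. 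A secondary nuisance is the bookkeeping of the degenerate sub‑cases ($I_B$ or $I_A$ a single point, $\beta_{\rm min}=0$ versus $\beta_{\rm min}>0$, and $h_2g$ above or below $\gamma_1+1$), each of which reduces to a behavior already covered above.
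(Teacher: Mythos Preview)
Your proposal is correct and follows essentially the same route as the paper: split the feasible interval at the threshold where the active constraint switches from \eqref{p7.3} to \eqref{p7.2}, show the $A$-branch is strictly decreasing, and on the $B$-branch reduce the sign of the derivative to that of the quadratic $q(\beta)$ (which coincides, up to the positive factor $h_1$, with the paper's $\Delta(\beta)$). The only cosmetic differences are that the paper argues the $A$-branch qualitatively rather than via your explicit affine form, and it establishes the at-most-one sign change of $q$ on $I_B$ by noting $\Delta$ is monotone there (being to the left of the vertex $(\gamma_1+1)/(h_2g)$) rather than through your discriminant/root-location dichotomy; both arguments are equivalent.
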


\begin{proof}
The sketch of the proof for the proposition is provided in the following first. If $h(\beta)$ is strictly unimodal in $\beta$, then it is in fact restricted to being strictly monotonically increasing, or strictly monotonically decreasing, or strictly monotonically increasing until $\hat{\beta} \in [\beta_{\min}, \beta_{\max}]$ and then strictly monotonically decreasing.
In the following, we will show that $h(\beta)$ is either monotonically decreasing or increasing first then decreasing in $\beta$.

With some algebraic manipulations, we can find that 
$h(\beta)$ prefers a smaller $\beta$ and a smaller $\alpha$;  \eqref{p7.2} prefers a smaller $\beta$ and a larger $\alpha$, while  \eqref{p7.3} prefers a larger $\beta$ and a larger $\alpha$. 
\begin{lem}\label{lem1}
For any given $\beta \in [\beta_{\rm min},\beta_{\rm max}]$, the optimal $\alpha$ will make at least one of the two constraints \eqref{p7.2} and \eqref{p7.3} holds the equality.
\end{lem}
\begin{proof}
See Appendix \ref{app1}.
\end{proof}

Based on Lemma  \ref{lem1}, we can conduct the proof in the following two cases.

\textit{\underline{Case 1}}: \eqref{p7.2} holds the equality with $\beta = \beta_{\min}$.

Assuming $\beta = \beta_{\rm min}$, with the increasing of $\beta$, $\alpha$ should also be increased such that both \eqref{p7.2} and \eqref{p7.3} can be satisfied. However, $h(\beta)$ will be strictly decreased. Recall that if $\beta$ increases to $\beta_{\rm max}$, $\alpha$ will reach to $1$, and then any larger $\beta$ will make {\tt P6} infeasible. In summary, $h(\beta)$ is strictly monotonically decreasing with the increase of $\beta$ in this case.

\textit{\underline{Case 2}}: \eqref{p7.3} holds the equality with $\beta = \beta_{\min}$.

Assuming $\beta = \beta_{\rm min}$, with the increase of $\beta$, the value of $\alpha$ can be decreased, such that both \eqref{p7.2} and \eqref{p7.3} can be satisfied. However, the value of the left side of \eqref{p7.2} continuously decreases with the increase of $\beta$ and the decrease of $\alpha$ until the constraint \eqref{p7.2} becomes active, under which, the solution of $\beta$  to \eqref{p7.2} and \eqref{p7.3} is denoted by $\beta_0$.
Moreover, we have $\beta_0 \le \gamma_1/h_2g$ (refer to Remark \ref{rem2}). Then, the same thing happens as in \textit{Case 1}, i.e., $h(\beta)$ is monotonically decreasing when $\beta \in [\beta_0,\beta_{\rm max}]$.

Now let us focus on the monotonicity of $h(\beta)$ when $\beta \in [\beta_{\min},\beta_0]$.
Since $\alpha$ will be decreased with increasing $\beta  \in [\beta_{\min},\beta_0]$,  the monotonicity of $h(\beta)$ can not be verified directly.
Recall that
\eqref{p7.3} holds the equality in this case (refer to Remark \ref{rem2}), so the optimal $\alpha$ can be written as
\begin{align}
\alpha = \min  \left\{B,1\right\}.
\end{align}
Based on the following lemma, we further conclude that it is always true that $\alpha = B$ when $\beta \in [\beta_{\min},\beta_0]$.
\begin{lem} \label{prop2}
For any $\beta \in [\beta_{\rm min},\beta_0]$, it is always true that $0 \le B \le 1$.
\end{lem}
\begin{proof}
See Appendix \ref{app2}
\end{proof}

Define a function $g(\alpha,\beta) = \alpha\beta - \alpha - \beta$, which has the same unimodality as $h(\beta)$.
Inserting $\alpha=B$ yields
\begin{align}
f(\beta)&\triangleq g(B, \beta) = B \beta - B - \beta, \nonumber\\
                            &= \frac{(\gamma_1-\beta h_2 g)(h_1+1)(\beta - 1) }{(\gamma_1-\beta h_2 g+1)h_1}- \beta .
\end{align}
Now the monotonicity of $f(\beta)$ can be verified by its first-order derivative, as given by
\begin{align}
&\!\!\!\!\!f^{\prime}(\beta) = \notag\\
&\!\!\!\!\!\frac{\left(h_2^2 g^2 \beta^2 \!-\! 2h_2g(\gamma_1+1)\beta \!+ \! h_2g \!+ \!\gamma_1^2 \!+\! \gamma_1\right)(h_1^2 \!+\! h_1)}{\left(\gamma_1-\beta h_2 g+1\right)^2 h_1^2} - 1. \label{first-order}
\end{align}
Please refer to Appendix \ref{app3} for the details.

Notice that $\left(\gamma_1-\beta h_2 g+1\right)^2 h_1^2 > 0$,
so it suffices to verify the monotonicity of $f(\beta)$ by the difference between the numerator and denominator of the first term of the right side of \eqref{first-order}. The difference is given by
\begin{align}
\!\!\!\!\!\!\Delta(\beta) \!\triangleq\! h_1\left(h_2 g \beta \!-\! (\gamma_1 \!+\! 1)\right)^2 \!+\! \left(h_2g \!-\! (\gamma_1\!+\!1) \right)(h_1^2 \!+\! h_1). \label{diff}
\end{align}
Please refer to Appendix \ref{app4} for the details.

 Recall that $\beta_0 \le \frac{\gamma_1}{h_2 g}$, so $\Delta(\beta)$ is continuous and strictly monotonically decreasing with $\beta \in \left[\beta_{\rm min},\beta_0\right]$. Thus we have
\begin{itemize}
  \item If $\Delta(\beta_{\min}) < 0$, then $\Delta(\beta) < 0$ and $f(\beta)$ decreases strictly monotonically with $\beta \in \left[\beta_{\rm min},\beta_0\right]$.
  \item If $\Delta(\beta_0) > 0$, then $\Delta(\beta) > 0$ and $f(\beta)$ increases strictly monotonically with $\beta \in \left[\beta_{\rm min},\beta_0\right]$.
  \item If $\Delta(\beta_{\min}) \ge 0$ and $\Delta(\beta_0) \le 0$, then there exists an unique $\hat{\beta} \in [\beta_{\min}~ \beta_0]$ which makes $f(\beta)$ increase  strictly monotonically with $\beta \in [\beta_{\min}, ~\hat{\beta}]$ and decrease  strictly monotonically with $\beta \in [\hat{\beta},~ \beta_0]$.
\end{itemize}
In summary, in Case 2, $h(\beta)$ either strictly monotonically decreases with $\beta \in [\beta_{\min},~\beta_{\max}]$, or strictly monotonically increases with $\beta \in [\beta_{\min},~ \hat{\beta}]$ and then strictly monotonically decreases with $\beta \in [\hat{\beta},~\beta_{\max}]$.

In conclusion, $h(\beta)$ is strictly unimodal with respect to $\beta \in [\beta_{\min}, \beta_{\max}]$ in both Case 1 and 2. This completes the proof.
\end{proof}
\begin{rem}
From above, we conclude that $h(\beta)$ cannot be an  strictly monotonically increasing function. Otherwise, {\tt P6} is feasible if and only if $\beta=\beta_{\max}$, since $h(\beta_{\max}) = 0$ always holds true (refer to Remark \ref{set}).

\end{rem}

As $h(\beta)$ is strictly unimodal with respect to $\beta$, the optimal $\beta$ can be obtained through the GSS method and, for a fixed $\beta$, the optimal $\alpha$ is given by \eqref{alpha}. So the \textit{semiclosed-form} solution of {\tt P6} can be attained through a GSS-based algorithm as shown in Algorithm \ref{alg2} .
Here we also remark that Algorithm \ref{alg2} converges to the unique global optimal solution to {\tt P6} due to its strict unimodality.
\begin{algorithm}\small
	\caption{~GSS-based algorithm for solving {\tt P6}} \label{alg2}
	\begin{algorithmic}[1]
		\STATE {{\bf Initialization:} Set $b_{\min} = 0$, $b_{\max} = 1$, the GSS parameter $a = 0.618$ and the tolerance $\epsilon = 10^{-4}$.}\\
		\WHILE {$b_{\max} - b_{\min} > \epsilon$}
		\STATE {Set $b_1 = b_{\min} + (1-a)(b_{\max} - b_{\min}$) and $b_2 = b_{\min} + a(b_{\max} - b_{\min}$).}
		\STATE {Compute the optimal $\alpha(b_1)$ and $\alpha(b_2)$ based on \eqref{alpha}.}
		\STATE {Compute $h(b_1)$ and $h(b_2)$ based on \eqref{p7.1}.}\\
		\STATE {{\bf if} $h(b_1) > h(b_2)$ {\bf then}}\\
		\STATE {\quad Update $b_{\max} = b_2$.}\\
		\STATE {{\bf else}}\\
		\STATE {\quad Update $b_{\min} = b_1$.}\\
		\STATE {{\bf endif}}\\
		\ENDWHILE
		\ENSURE {$\beta^* = \frac{b_{\max}+b_{\min}}{2}$ and $\alpha(\beta^*)$.}
	\end{algorithmic}
\end{algorithm}
\begin{rem}
Due to the strict unimodality of {\tt P6} with $\beta$, we can draw the fact that the stationary point obtained by Algorithm \ref{Alg1} in SISO cases is exactly the unique global optimal solution whenever {\tt P6} is feasible, which will be verified in the simulation section.
\end{rem}


\section{Numerical Results}
In this section, some simulation results are shown to evaluate the performance of the proposed cooperative SWIPT NOMA protocol.
\subsection{Simulation Setup}
In the simulations, we assume the two users are randomly allocated in a 5-meter $\times$ 6-meter room and the BS is fixed at the edge with a coordinate $(0,2.5 {\rm m})$. The distance-dependent pass loss is modeled by $P_L = 10^{-3}d^{-\alpha}$, in which $d$ denotes the Euclidean distance in meters and $\alpha$ is the path loss exponent. Without loss of generality, we assume the path loss exponents for user $1$ and user $2$ are $\alpha_1 = 4$ and $\alpha_2 = 2$, respectively. Assume that the noise power is $\sigma_1^2 = \sigma_2^2 = \sigma^2 = -90$ dBm and the total system bandwidth is $1$ MHz. By applying the Racian fading channel model, the downlink channels are modeled as
\begin{subequations}
\begin{align}
{\bf \tilde{h}}_{2} &= \sqrt{\frac{K}{1+K}}{\bf h}_2^{\rm LOS} + \sqrt{\frac{1}{1+K}}{\bf h}_2^{\rm NLOS}\\
\tilde{g}_{1} &= \sqrt{\frac{K}{1+K}}g_1^{\rm LOS} + \sqrt{\frac{1}{1+K}}g_1^{\rm NLOS}
\end{align}
\end{subequations}
where $K = 3$ denotes the Rician factor,
 ${\bf h}_2^{\rm LOS}$ and $g_1^{\rm LOS}$ follow the LOS deterministic components,
 and ${\bf \tilde{h}}_1$, ${\bf h}_2^{\rm NLOS}$ and $g_1^{\rm NLOS}$ are the standard Rayleigh fading components with zero mean and unit variance.
All the simulation results are averaged over 1,000 independent channel realizations.

For comparison, we introduce some other transmission strategies, namely the noncooperative NOMA strategy, OMA with dynamic time allocation and OMA with  fixed time allocation, which
are described as follows
\begin{itemize}
  \item In the \textit{noncooperative NOMA strategy}, the system still performs the NOMA strategy. However, there is no SWIPT operation at user 2, i.e.,  the cooperative transmission stage is removed. The BS designs the beamforming vectors to maximize the data rate of user $2$ and, meanwhile, satisfy the QoS requirement of user $1$.
  \item In the \textit{OMA with dynamic time allocation}, the system operates in the TDMA mode, i.e., the BS sends information to user $1$ and user $2$ in different time intervals. Moreover, the time allocation scheme obeys a dynamic manner.
  \item In the \textit{OMA with fixed time allocation}, the system operates in the TDMA mode and the time resource is evenly allocated to user $1$ and user $2$.
\end{itemize}

\subsection{Sum Rate of Users}
\begin{figure}[!tp]
  \centering
  \!\!\!\!\includegraphics[width=0.96\linewidth]{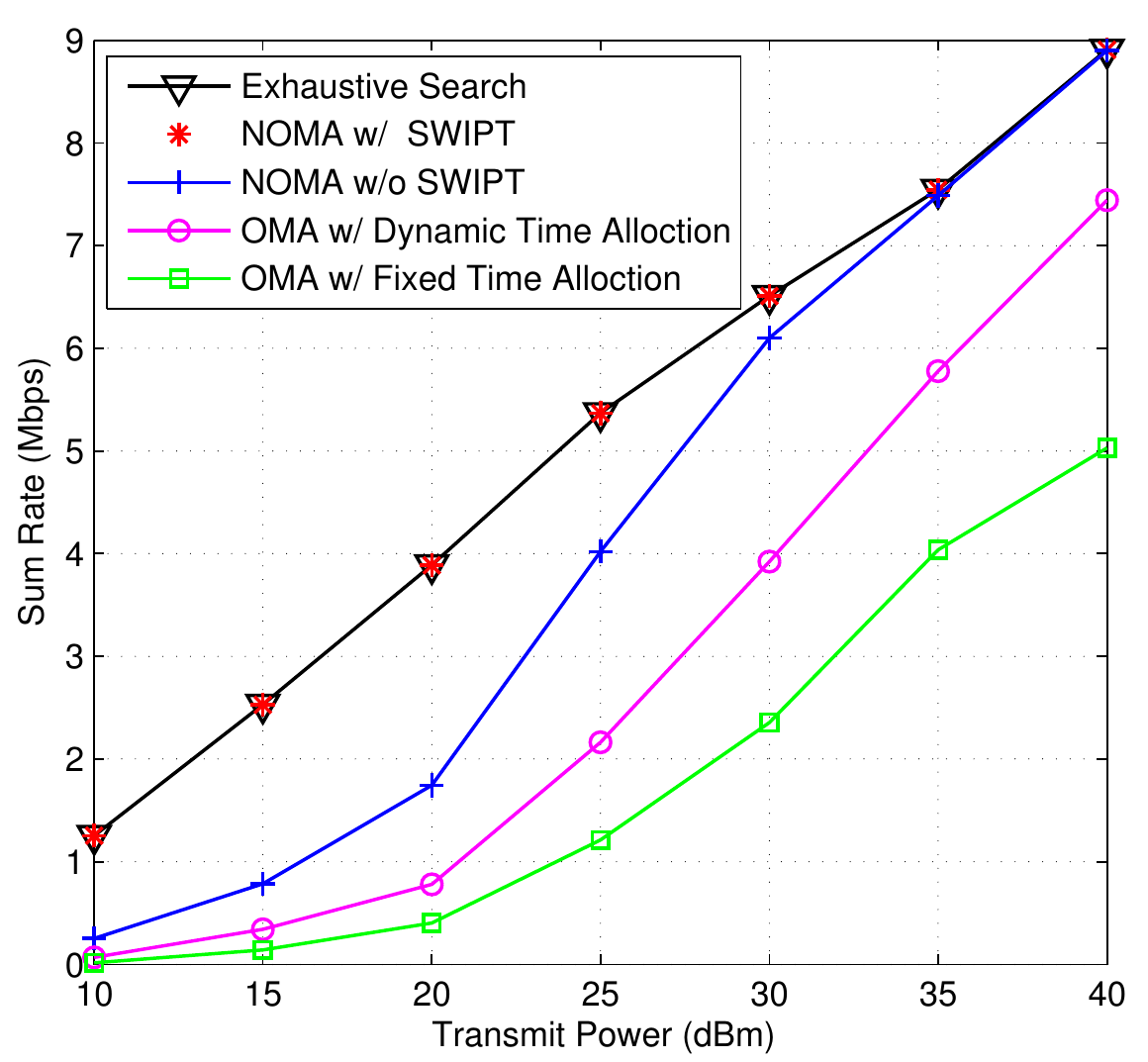}\\
  \caption{Sum rate comparison of different transmission strategies in MISO cases with $N_t = 2$ and $\gamma_1 = 1$.}\label{comp_miso}
\end{figure}

Fig.\ref{comp_miso} depicts the sum rate of two users versus transmission power of the BS in the MISO case.
            One can observe that the performance of the proposed SCA-based algorithm is very close to that of the exhaustive search method. Also, we find that
            the proposed cooperative SWIPT NOMA strategy yields the best performance among all the considered transmission strategies.
            Specifically, the proposed strategy outperforms the noncooperative NOMA scheme in the low power regime and has the same performance in the high power regime. This is because the cooperative transmission provides a higher gain, which is very useful to improve the reception reliability when the system experiences deep fading in the low power regime.
Besides, the noncooperative NOMA strategy achieves a better performance than the OMA strategy, indicating the advantage of NOMA in improving the system SE.
Moreover, the OMA with dynamic time allocation outperforms that with fixed time allocation due to its extra flexibility on resource allocation.
The system performance comparison in SISO cases is shown in Fig. \ref{comp_siso}, it can be seen that the proposed strategy always achieves a better performance than any other considered strategies.

\begin{figure}[!tp]
  \centering
  \includegraphics[width=0.92\linewidth]{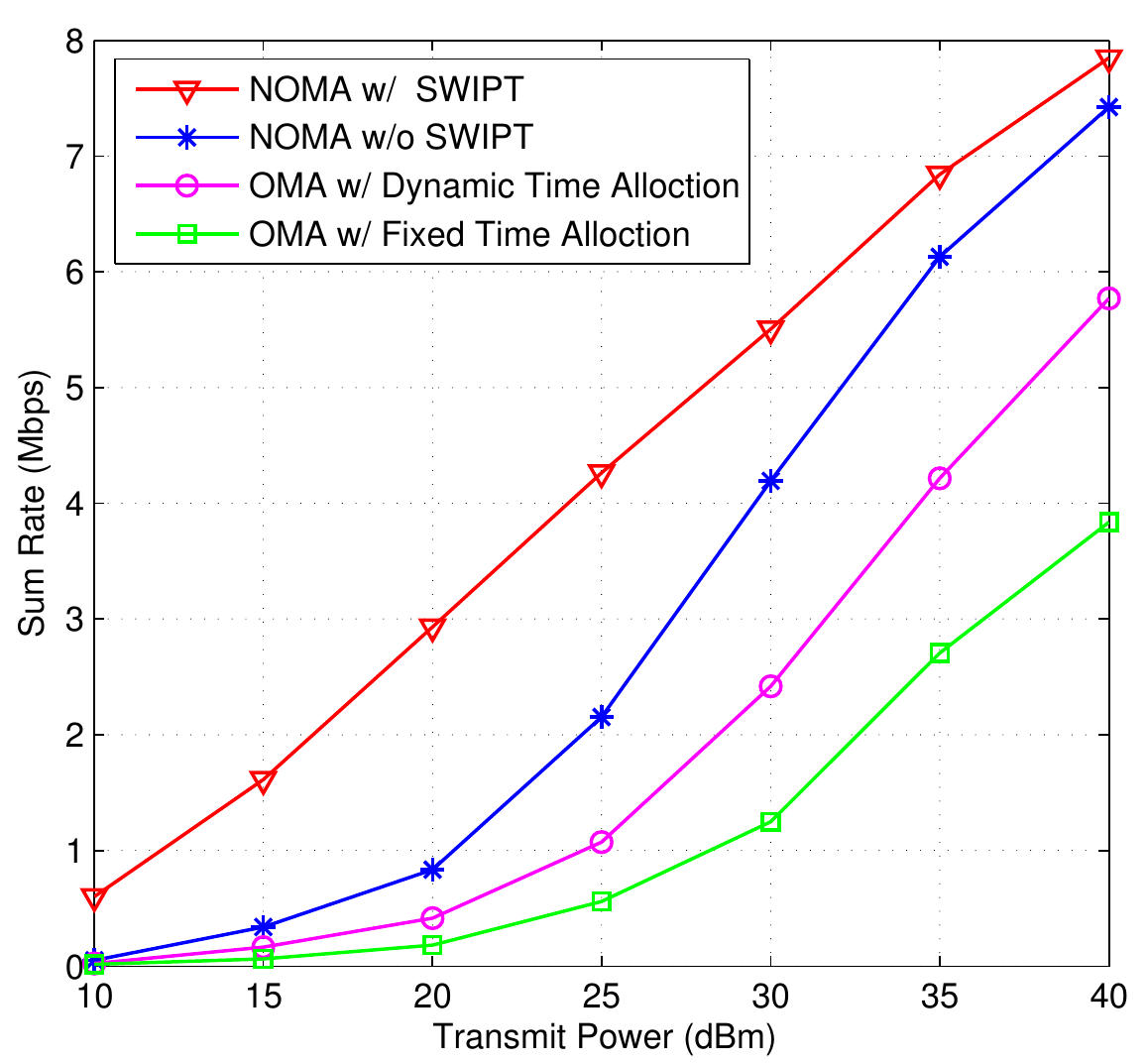}\\
  \caption{Sum rates comparison of different transmission strategies in SISO cases with $\gamma_1 = 1$.}\label{comp_siso}
\end{figure}

\begin{figure}[!tp]
  \!\!\!\!\includegraphics[width=0.96\linewidth]{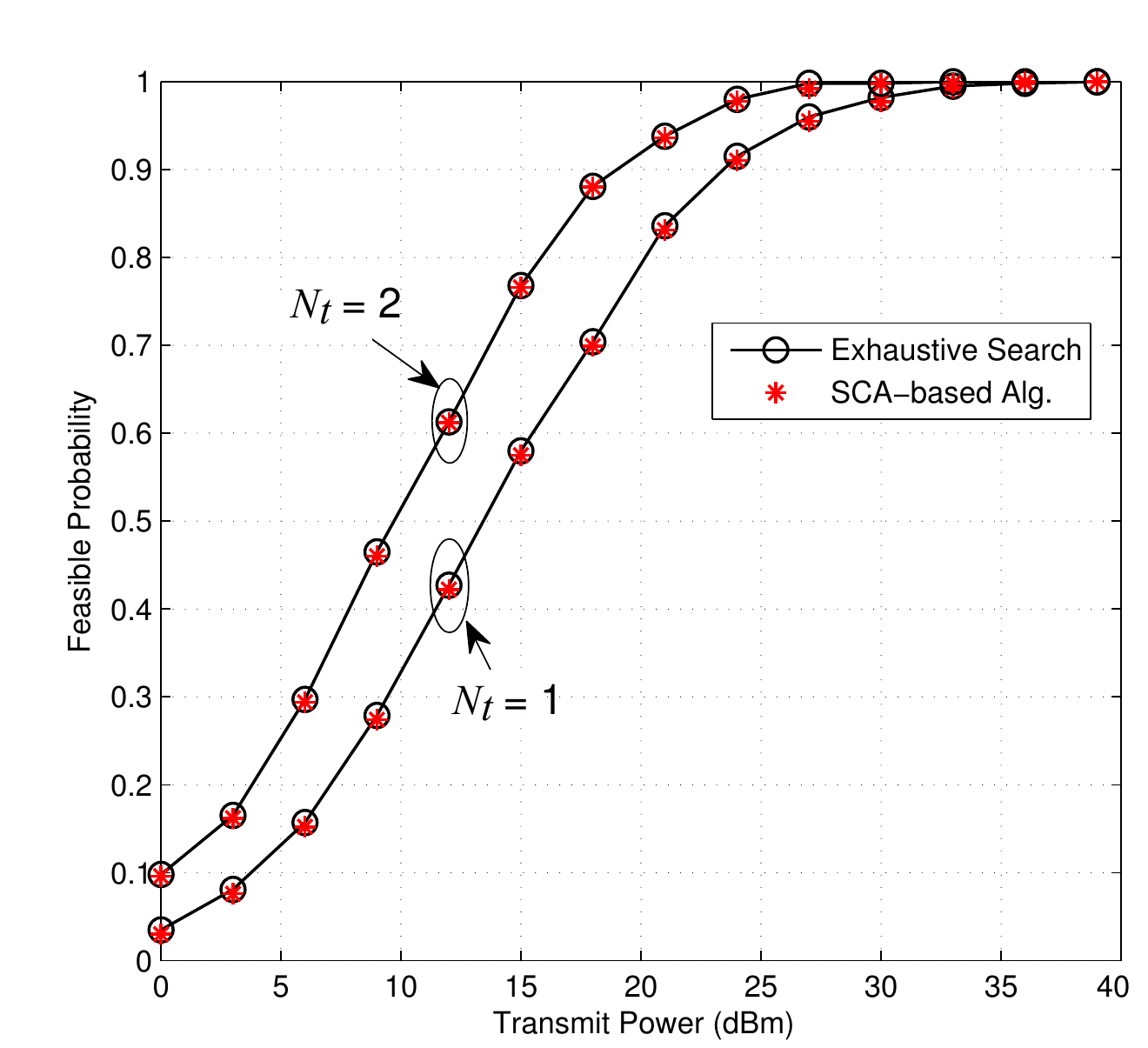}\\
  \vspace{-4.5mm} \caption{Feasible probability comparison between exhaustive search based and proposed algorithms in MISO and SISO cases with $\gamma_1 = 1$.} \label{feasible_probability_power}
\end{figure}

Fig. \ref{feasible_probability_power} depicts the feasible probability versus transmit power of the BS with different algorithms in MISO cases and SISO cases. It can be seen that the feasible probability increases with the transmit power. Besides, the feasible probability of the proposed SCA-based algorithm is very close to that of the exhaustive search method based algorithm, which implies that the conservativeness of the proposed approximation method is limited in our simulation.

\begin{figure}[!tp]
  \centering
  \!\!\!\!\!\!\includegraphics[width=0.96\linewidth]{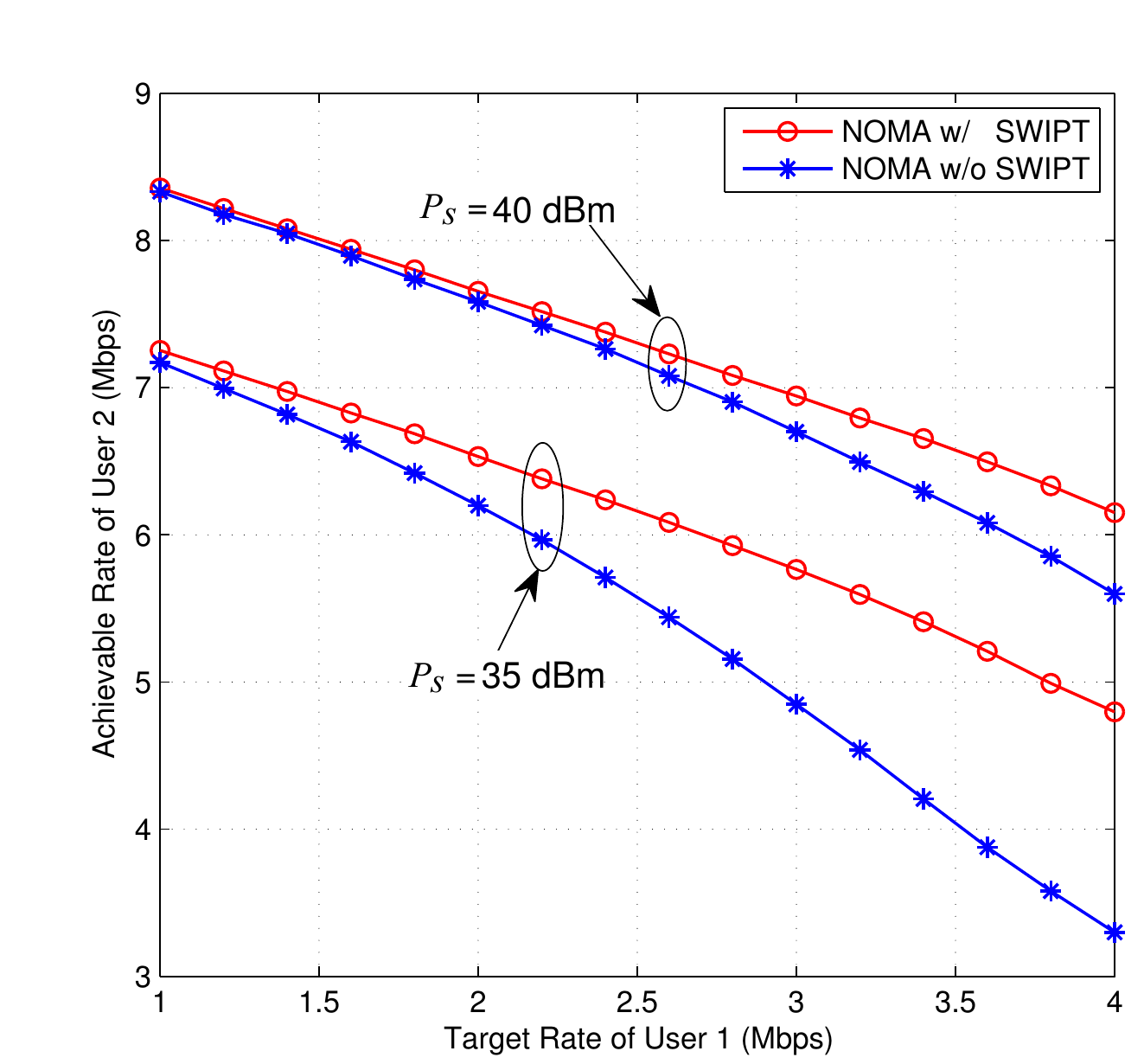}\\
  \caption{Rates tradeoff between user $1$ and user $2$ in MISO cases with $N_t =2$.} \label{tradeoff}
\end{figure}

To investigate the impact of different target rates of user $1$ on the achievable data rate of user $2$, Fig. \ref{tradeoff} shows the rates tradeoff between the two users in two different transmission power cases, i.e., $P_s = 35$ dBm and $40$ dBm, respectively. It is observed that the achievable data rate of user $2$ decreases with the increasing of the target rate of user $1$. The reason is that, with the increasing of the target rate of user $1$, the BS coordinates the beamforming vectors to satisfy the requirement of user $1$, which leads to a reduction of the power allocated to user $2$. Note that the gap between the two transmission schemes increases with the increasing target rate of user $1$ in two transmission power cases (both are high transmission power cases, as depicted in Fig. \ref{comp_miso}), implying that the cooperative SWIPT NOMA works better than the noncooperative NOMA when user $1$ prefers a higher data rate.

\begin{figure}[!tp]
  \centering
  \includegraphics[width=0.96\linewidth]{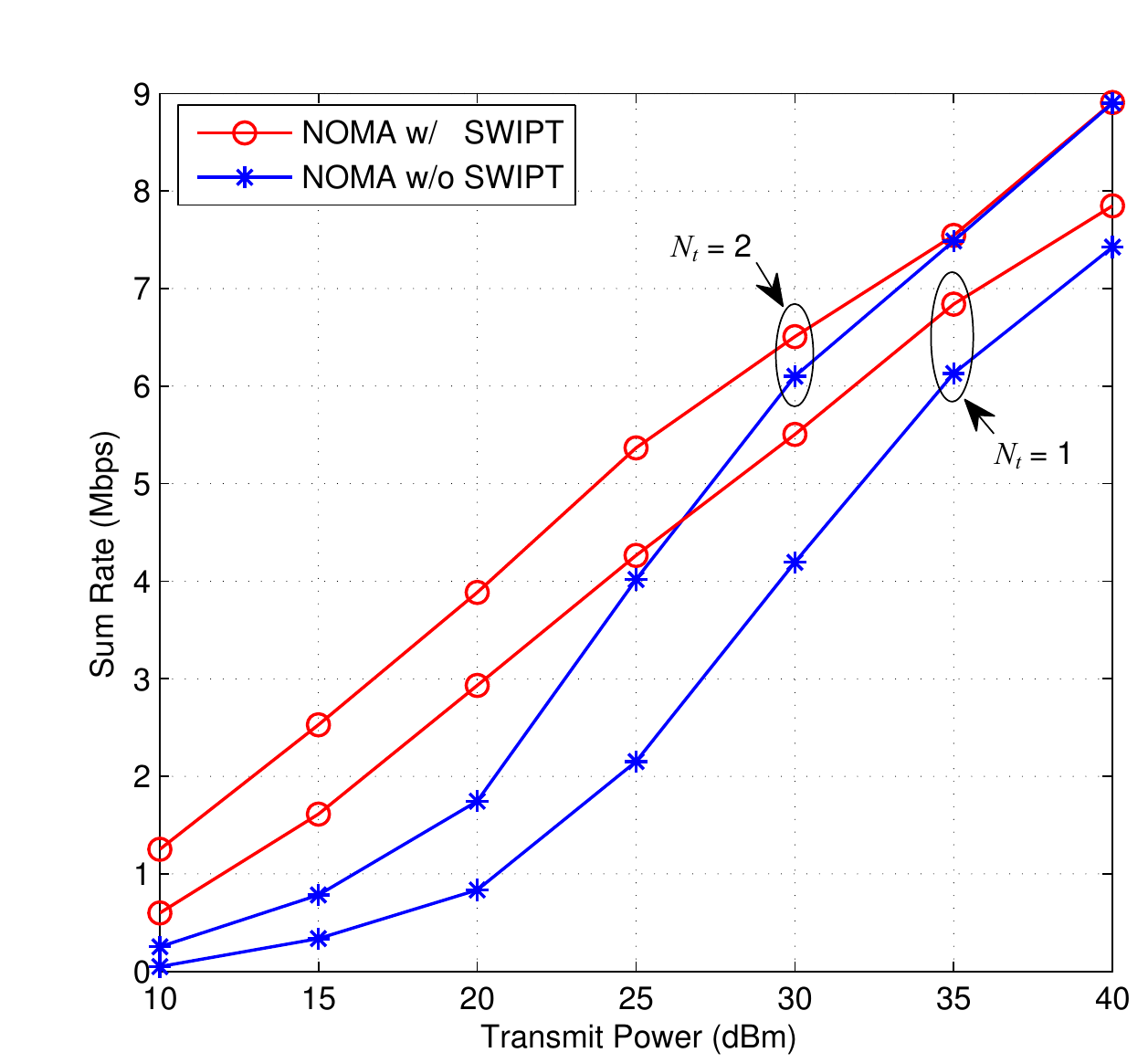}\\
  \!\caption{Performance comparison of different transmission strategies with different antenna number with $\gamma_1 = 1$.}\label{antenna}
\end{figure}
Fig. \ref{antenna} shows the sum rate comparison of different transmission strategies with different numbers of transmission antennas. We observe that the multiple antenna technique can efficiently improve the sum rate of the system in different transmission strategies. Furthermore, we find that the noncooperative NOMA strategy with MISO achieves a better performance than the cooperative SWIPT NOMA with SISO in the high transmission power regime, which also implies the importance of the multi-antenna technique.

\begin{figure}[!h]
  \centering
  \includegraphics[width=0.96\linewidth]{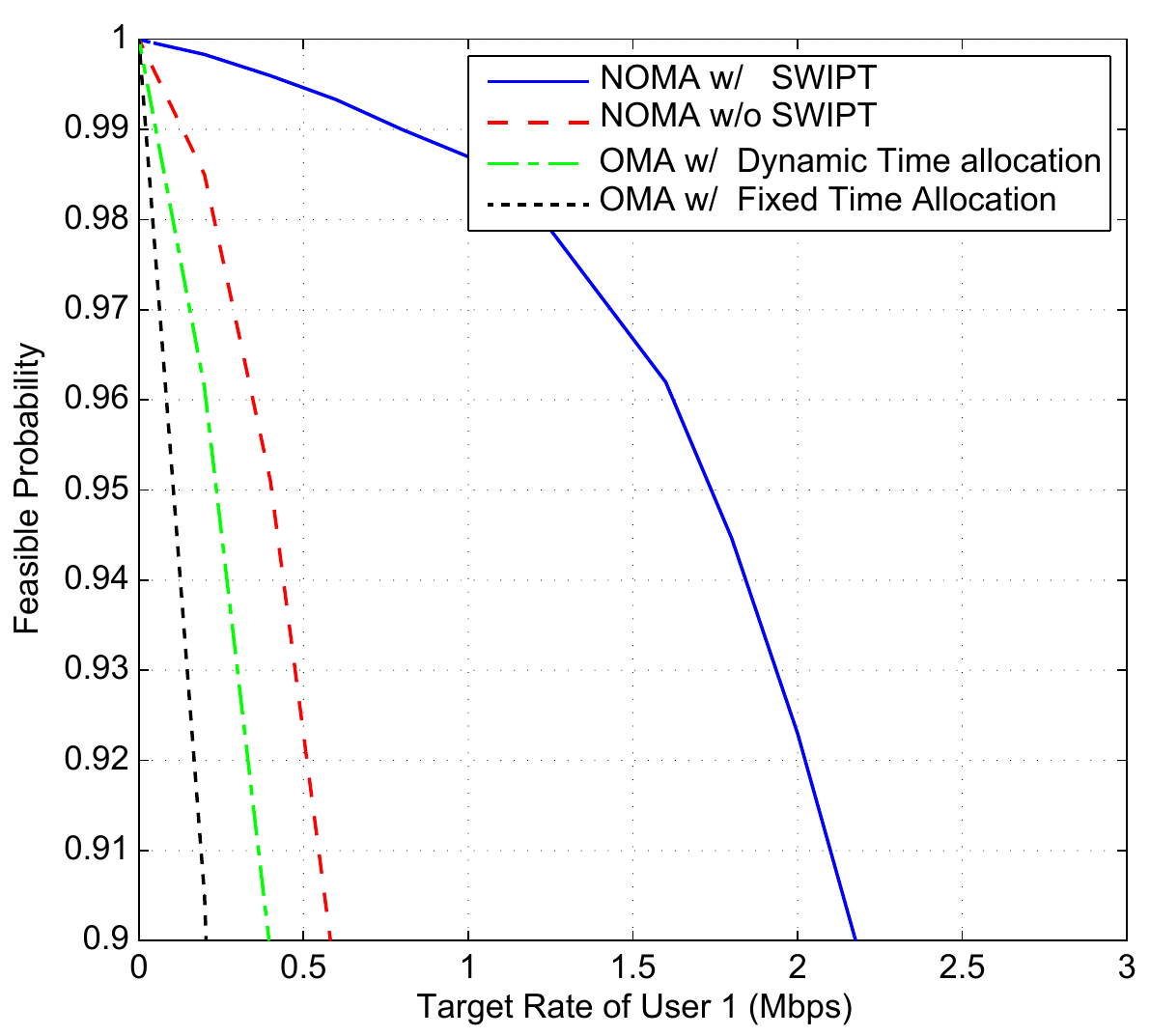}
  \caption{Feasible probabilities of different transmission strategies in MISO cases with $N_t = 2$ and $P_s = 25$ dBm.}
  \label{feasible_gamma}
\end{figure}
Fig. \ref{feasible_gamma} demonstrates the feasible probability versus target rate of user $1$ with different transmission strategies. As can be seen from this figure, with the same feasible probability, the proposed cooperative SWIPT NOMA strategy supports a larger target rate of user $1$ compared with the other considered strategies. Specifically, with a $90\%$ feasible probability, the supported target rate of user $1$ with the cooperative SWIPT NOMA strategy is $2.2$ Mbps, whereas the noncooperative NOMA strategy can only support user $1$ with a target rate of $0.6$ Mbps. From another perspective, with the same target rate of user $1$, the proposed strategy yields the largest feasible probability among all the considered strategies, indicating that the cooperative SWIPT NOMA strategy can significantly improve the communication reliability of user $1$.

\subsection{Optimality of the Obtained Solutions}

\begin{figure}[!tp]
  \centering
  \vspace{3mm}\includegraphics[width=0.96\linewidth]{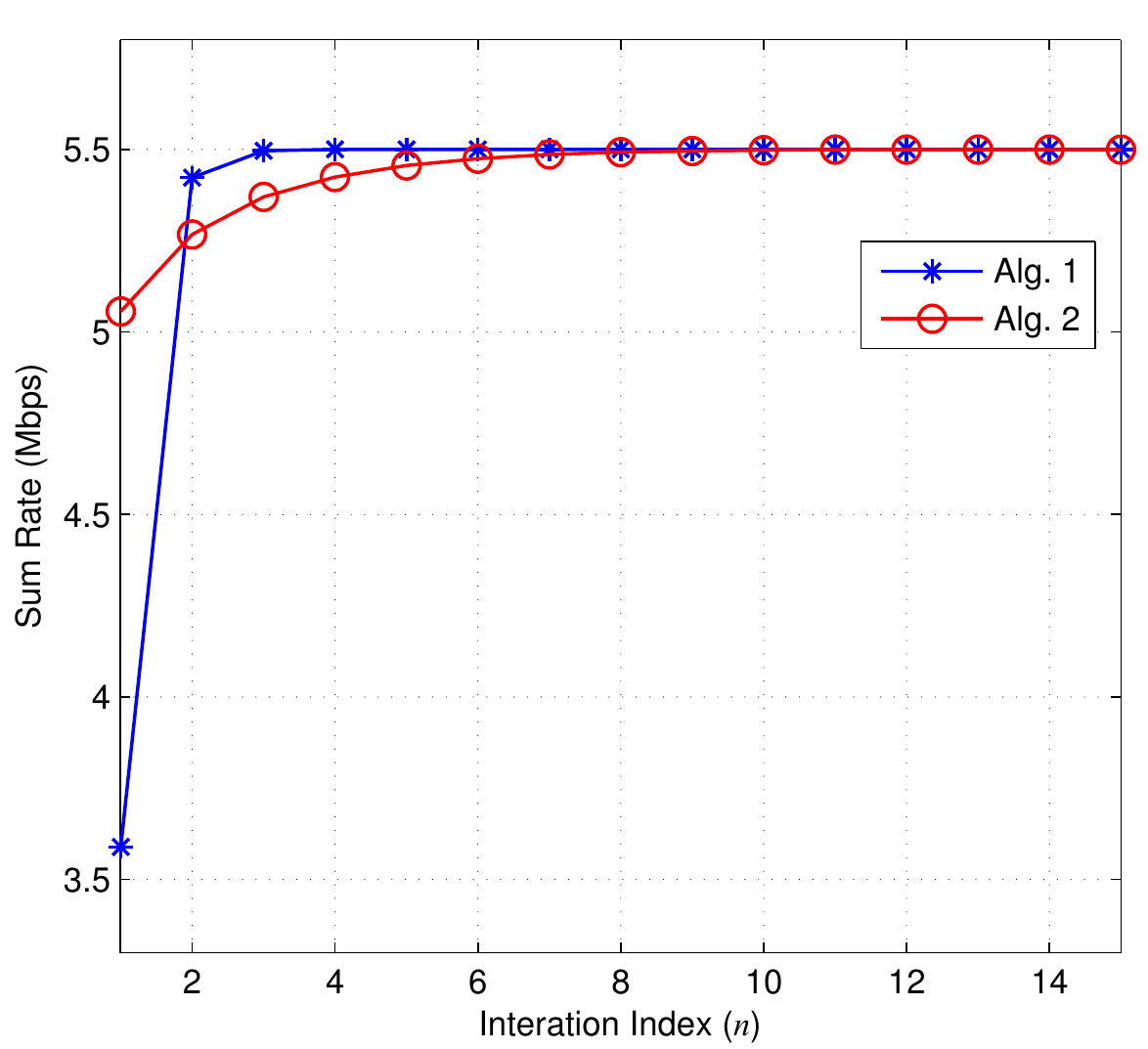}
\caption{Sum rate comparison of SCA and GSS in SISO cases with $P_s = 30$ dBm.}\label{sca_vs_gss}
\end{figure}

Fig. \ref{sca_vs_gss} displays the convergence behavior of the Algorithm \ref{Alg1} and the Algorithm \ref{alg2} in SISO cases. It can be observed that the two algorithms converge to the same sum rate after several iterations, which validates the global optimality of the solution generated by Algorithm \ref{Alg1} in SISO cases.
Thus it motivates us to investigate the solution optimality of Algorithm \ref{Alg1} in MISO cases. Due to the complexity of {\tt P4}, it is extremely challenging for us to prove the obtained beamforming matrices to be rank-one. So, the rank optimality of the obtained solutions is demonstrated with simulation results as presented in Fig. \ref{rank},
where $R_{\lambda}$ is defined as the ratio of the largest eigenvalue and the second largest eigenvalue of beamforming vectors ${\bf W}_1$ and ${\bf W}_2$,
and which depicts the ratio $R_{\lambda}$ versus antenna number for $2,000$ channel realizations. We can see that {\tt P4} can always yield a sufficiently large $R_{\lambda}$ in all simulation cases,  implying that the proposed SCA based algorithm can always attain rank-one solutions when it converges, i.e., the SDR tightness still holds.
\begin{figure}[!tp]
  \centering
  \!\!\!\includegraphics[width=0.96\linewidth]{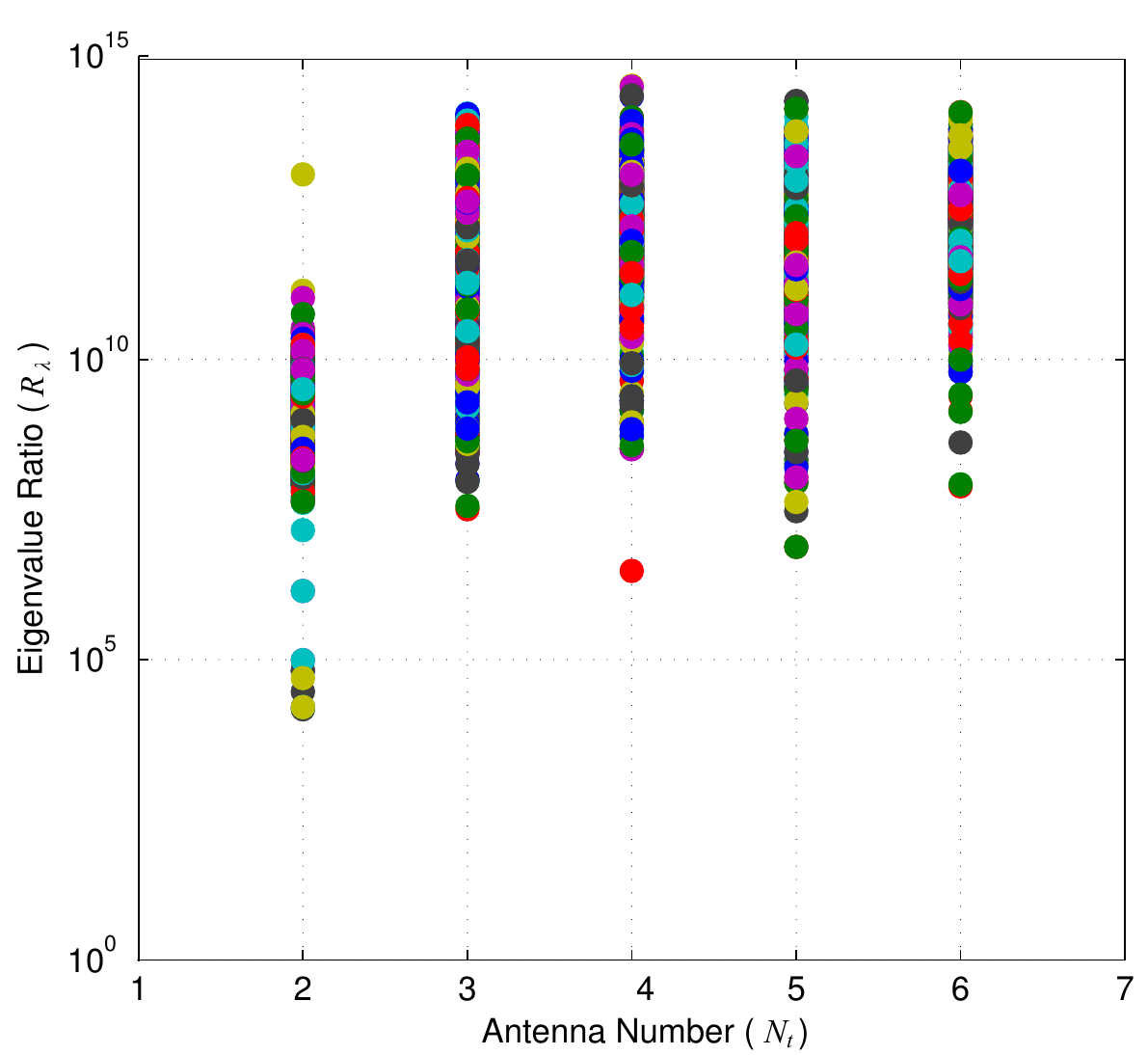}\\
  \caption{Eigenvalue ratio of the obtained solution versus antenna number in MISO cases with $P_s = 30$ dBm.} \label{rank}
\end{figure}


\section{Conclusions}
In this paper, we have proposed a novel cooperative SWIPT-aided NOMA transmission strategy. Two transmission protocols, namely, the cooperative SWIPT NOMA in MISO and SISO cases, have been considered. In MISO cases, the joint design of beamforming and power splitting has been considered. We have equivalently transformed the formulated problem with the SDR technique and proved the rank-one optimality. The reformulated problem can be solved to its global optimal solution by two-dimensional exhaustive search.
However, due to the high complexity of the two-dimensional exhaustive search, an SCA-based algorithm was proposed to efficiently yield at least a stationary point. Motivated by the potential applications, the cooperative SWIPT NOMA protocol design in SISO cases has also been investigated. To obtain the global optimal solution, a GSS-based algorithm was presented by proving the optimal value of the formulated problem is unimodal with respect to the PS ratio. In addition, the optimal solution can be written as a semiclosed-form expression. Moreover, we have found that both the algorithms can converge to the unique global optimal solution in SISO cases. The simulation results have shown that the proposed cooperative SWIPT NOMA strategy outperforms the existing strategies, thus marking it a promising candidate for supporting the functionality of the IoT scenarios.


\begin{appendices}
\section{Proof of Lemma \ref{lem1}}\label{app1}
We prove Lemma \ref{lem1}, i.e., the optimal $\alpha$ makes at least one of the two constraints \eqref{p7.2} and \eqref{p7.3} hold the equality, by contradiction.

Assume $\alpha_0$ and $\beta_0$ are the optimal solution of {\tt P6} and make the constraints \eqref{p7.2} and \eqref{p7.3} hold the inequality.
There must exists a solution $(\alpha^{\prime}<\alpha_0, \beta_0)$ of {\tt P6}, which can guarantee the the constraints satisfied, and meanwhile, decrease the objective function. Thus the solution $(\alpha_0, \beta_0)$ could not be the optimal solution. Therefore, for given $\beta$, the the optimal $\alpha$ will make at least one of the two constraints \eqref{p7.2} and \eqref{p7.3} hold the equality.
This completes the proof. \hfill $\blacksquare$

\section{Proof of Proposition \ref{prop2}}\label{app2}
Here we prove that it is always true that $0 \le B \le 1$ with $\beta \in [\beta_{\min}, \beta_0]$. By Assuming $B \le 1$, we have
\begin{align*}
& \frac{(\gamma_1-\beta h_2 g)(h_1+1)}{(\gamma_1-\beta h_2 g +1)h_1} \le 1\\
\Leftrightarrow  &~ (\gamma_1 - \beta h_2 g)(h_1 + 1) \le (\gamma_1 - \beta h_2 g +1)h_1 \\
\Leftrightarrow  &~ \gamma_1 h_1 +\gamma_1 \!-\! \beta h_1 h_2 g \!-\! \beta h_2 g \le \gamma_1 h_1 \!-\! \beta h_1 h_2 g \!+\! h_1\\
\Leftrightarrow  &~ \beta h_2 g \ge \gamma_1 - h_1
\Leftrightarrow  \beta \ge \frac{\gamma_1 - h_1}{h_2 g}.
\end{align*}
Due to the minimum value of $\beta$ satisfies $\beta_{\min} = \frac{(\gamma_1 - h_1)^+}{h_2 g} \ge \frac{\gamma_1 - h_1}{h_2 g}$. So, for any $\beta \in [\beta_{\min}, \beta_{\max}]$, $B \le 1$ holds true. We observe that $B \ge 0$ always holds true when $\beta \le \frac{\gamma_1}{h_2 g}$. Due to $\beta_0 \le \frac{\gamma_1}{h_2 g}$, we conclude that $0 \le B \le 1$ is always true with $\beta \in [\beta_{\min}, \beta_0]$.
\hfill $\blacksquare$

\begin{figure*}[!tp]
\normalsize
\begin{align}\label{Eq38}
f^{\prime}(\beta) &= \frac{\left(-2h_2g\beta + h_2g+\gamma_1\right)(\gamma_1+1-h_2g\beta) h_1(h_1+1)}{\left(\gamma_1-\beta h_2 g+1\right)^2 h_1^2}
+ \frac{\left(-h_2g\beta^2 + (h_2g+\gamma_1)\beta -\gamma_1\right)(h_1+1)h_1h_2g}{\left(\gamma_1-\beta h_2 g+1\right)^2 h_1^2} - 1 \nonumber\\
&= \frac{\left(2h_2^2g^2\beta^2 -2h_2g(\gamma_1+1)\beta - h_2g(h_2g + \gamma_1)\beta + (h_2g +\gamma_1)(\gamma_1+1)\right) h_1(h_1+1)}{\left(\gamma_1-\beta h_2 g+1\right)^2 h_1^2} \nonumber\\
&\quad + \frac{\left(-h_2^2g^2\beta^2 + h_2g(h_2g+\gamma_1)\beta -h_2g\gamma_1\right)h_1(h_1+1)}{\left(\gamma_1-\beta h_2 g+1\right)^2 h_1^2} - 1 \nonumber\\
&= \frac{\left(h_2^2 g^2 \beta^2 - 2h_2g(\gamma_1+1)\beta +h_2g+\gamma_1^2 + \gamma_1\right) h_1(h_1+1)}{\left(\gamma_1-\beta h_2 g+1\right)^2 h_1^2} - 1. \tag{38}
\end{align}
 \hrulefill
\end{figure*}
\begin{figure*}[!tp]
\begin{align}\label{Eq39}
\Delta(\beta)  &= \left(h_2^2 g^2 \beta^2 - 2h_2g(\gamma_1+1)\beta +h_2g+\gamma_1^2 + \gamma_1\right) h_1(h_1+1) - \left(\gamma_1-\beta h_2 g+1\right)^2 h_1^2 \nonumber\\
&= h_1 h_2^2 g^2 \beta^2 - 2h_1 h_2 g (\gamma_1+1) \beta + (h_2 g + \gamma_1^2 + \gamma_1)h_1(h_1+1) - (\gamma_1+1)^2h_1^2 \nonumber\\
&= h_1(h_2^2 g^2 \beta^2 - 2h_2 g (\gamma_1+1) \beta + (\gamma_1+1)^2 - (\gamma_1+1)^2) + (h_2 g + \gamma_1^2 + \gamma_1)h_1(h_1+1) - (\gamma_1+1)^2h_1^2 \nonumber\\
&= h_1\left(h_2 g \beta - (\gamma_1 + 1)\right)^2 + (h_2 g + \gamma_1^2 + \gamma_1)h_1(h_1+1) - (\gamma_1+1)^2h_1(h_1 + 1) \nonumber \\
&= h_1\left(h_2 g \beta - (\gamma_1 + 1)\right)^2 + h_1\left(h_2g - (\gamma_1+1) \right)(h_1 + 1). \tag{39}
\end{align}
 \hrulefill
\end{figure*}

\section{Proof of Equation \eqref{first-order} }\label{app3}
Here we give the derivation of the first-order derivative of $f(\beta)$, and
$f(\beta)$ is given as below
\begin{align}
f(\beta) &= \frac{(\gamma_1-\beta h_2 g)(h_1+1)(\beta - 1) }{(\gamma_1-\beta h_2 g+1)h_1}- \beta \nonumber\\
         &= \frac{\left(-h_2g\beta^2 + (h_2g+\gamma_1)\beta -\gamma_1\right)(h_1+1) }{(\gamma_1-\beta h_2 g+1)h_1}- \beta. \nonumber
\end{align}
Then, the first-order derivative can be obtained by \eqref{Eq38}.
\hfill  $\blacksquare$

\section{Proof of Equation \eqref{diff}} \label{app4}
It suffices to verify the monotonicity of $f(\beta)$ by the difference between the numerator and denominator of the first term of the right side of \eqref{first-order}. The detailed derivation of the difference is given by \eqref{Eq39}.
\hfill $\blacksquare$
\end{appendices}

\bibliographystyle{IEEEtran}
\bibliography{NOMA_full}

\begin{IEEEbiography}[{\includegraphics[width=1in,height=1.25in,clip,keepaspectratio]{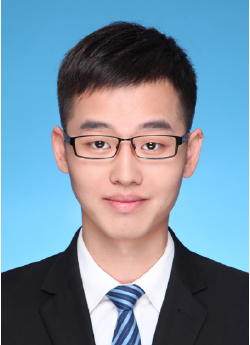}}]{Yanqing Xu}
received the B.S. degree in Electrical Engineering from Hangzhou Dianzi University, Hangzhou, China, in 2014.
He is currently working toward the Ph.D. degree at the State Key Laboratory of
Rail Traffic Control and Safety, Beijing Jiaotong University,  Beijing, China.
From March to September 2017, he was a  Visiting Student at The Chinese University of Hong Kong, Shenzhen, China.
His current research interests include wireless power transfer, non-orthogonal multiple access
and ultra-reliable and low-latency communications.
\end{IEEEbiography}

\begin{IEEEbiography}[{\includegraphics[width=1in,height=1.25in,clip,keepaspectratio]{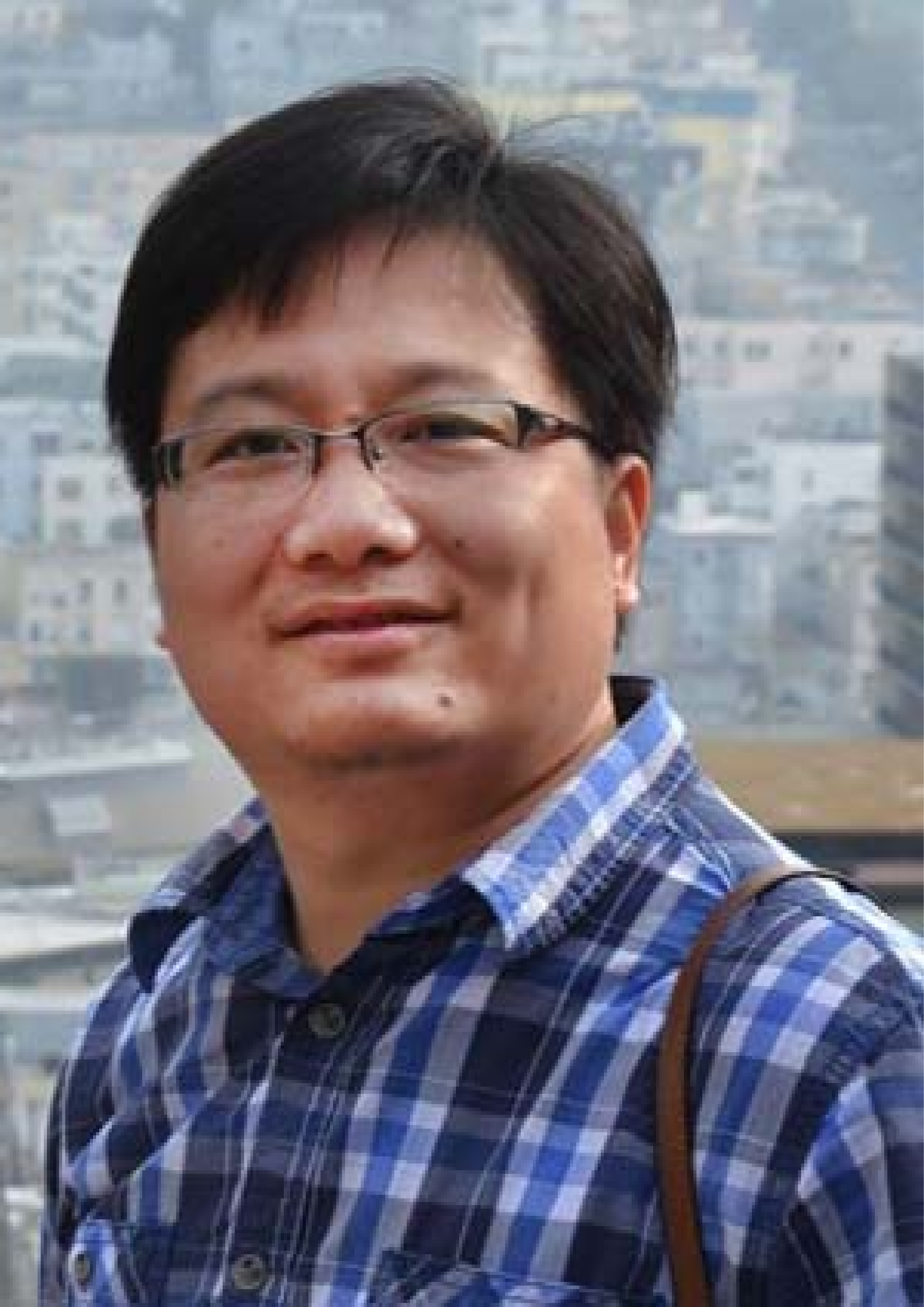}}]{Chao Shen}
(S'12-M'13) received the B.S. degree in communication engineering and the Ph.D. degree in signal and information processing from the Beijing Jiaotong University (BJTU), Beijing, China, in 2003 and 2012, respectively. He was a postdoc at BJTU, and also a Visiting Scholar at the University of Maryland, College Park of Prof. Sennur Ulukus. Since March 2015, he has been with the State Key Laboratory of Rail Traffic Control and Safety, BJTU, Beijing, China, as an Associate Professor.
His current research interests focus on the ultra-reliable and low-latency communications, physical layer design of energy harvesting communications, and energy-efficient wireless communications for high-speed rails.
\end{IEEEbiography}

\begin{IEEEbiography}[{\includegraphics[width=1in,height=1.25in,clip,keepaspectratio]{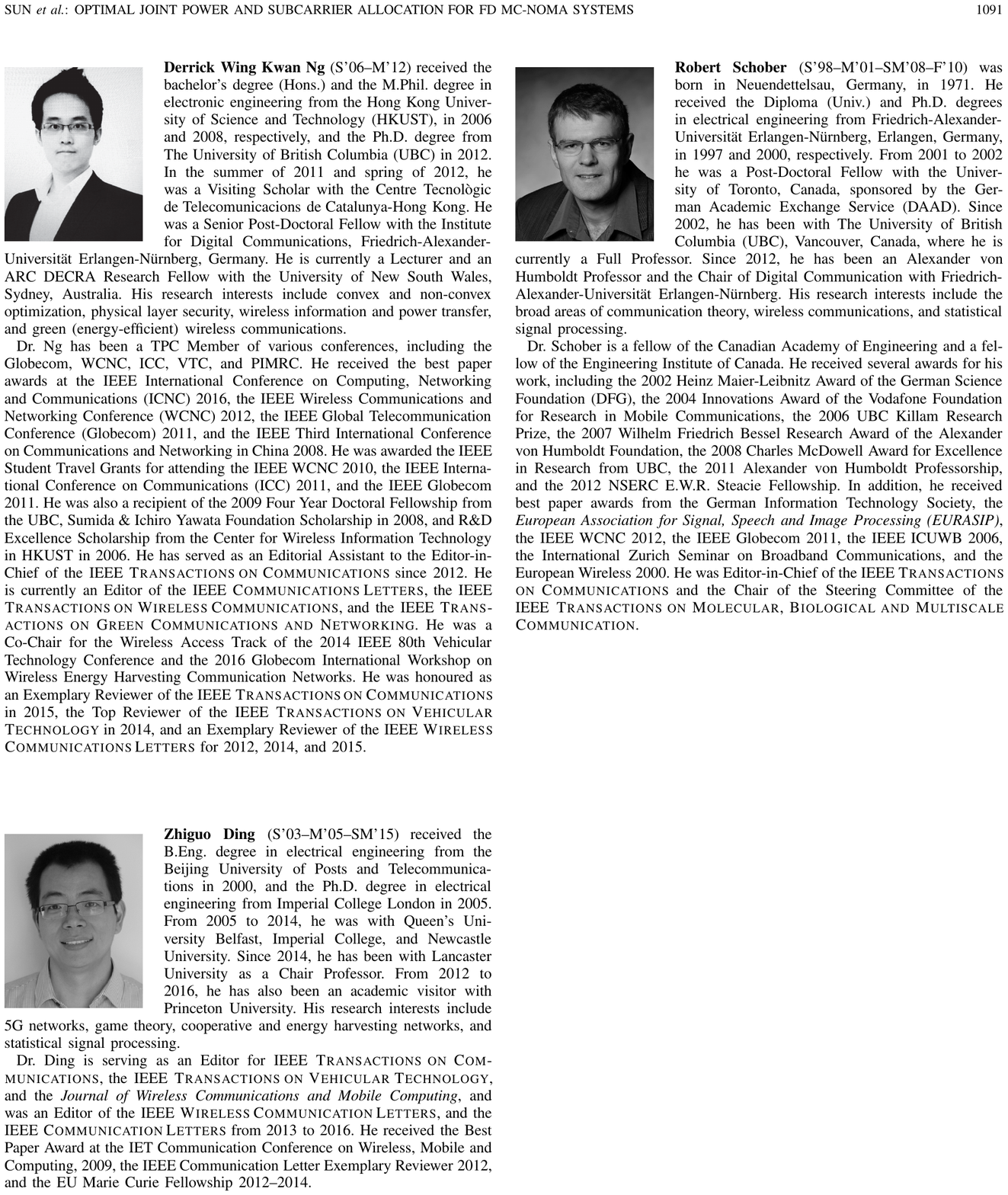}}]{Zhiguo Ding}
(S'03-M'05-SM'15) received his B.Eng in Electrical Engineering from the Beijing University of Posts and Telecommunications in 2000, and the
Ph.D degree in Electrical Engineering from Imperial College London in 2005. From Jul. 2005 to Aug. 2014, he was working in Queen's University Belfast, Imperial College and Newcastle University. Since Sept. 2014, he has been with Lancaster University as a Chair Professor. From Oct. 2012 to Sept. 2016, he has been also with Princeton University as a Visiting Research Collaborator.

Dr Ding's research interests are 5G networks, game theory, cooperative and energy harvesting networks and statistical signal processing.
He is serving as an Editor for \textit{IEEE Transactions on Communications}, \textit{IEEE Transactions on Vehicular Technologies} and \textit{Journal of Wireless Communications and Mobile Computing}. He was an Editor for \textit{IEEE Wireless Communication Letters} and \textit{IEEE Communication Letters} from 2013-2016. He is a Leading Guest Editor for the IEEE JSAC Special Issue on Non-orthogonal Multiple Access for 5G and a Guest Editor for the IEEE Wireless Communication Special Issue on Non-orthogonal Multiple Access. He received the best paper award in IET Comm. Conf. on Wireless, Mobile and Computing, 2009, IEEE Communication Letter Exemplary Reviewer 2012, and the EU Marie Curie Fellowship 2012-2014.
\end{IEEEbiography}

\begin{IEEEbiography}[{\includegraphics[width=1in,height=1.25in,clip,keepaspectratio]{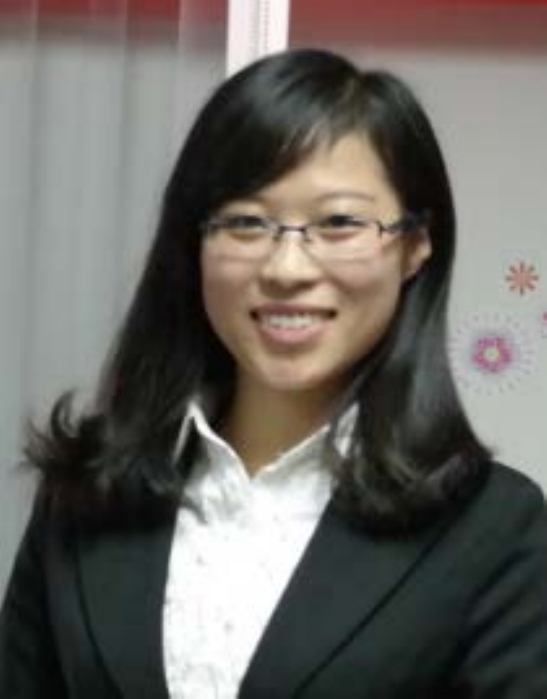}}]{Xiaofang Sun}
(S'14) received the B.S. degree from Beijing Jiaotong University, Beijing, China, in 2013, where she is currently working toward the Ph.D. degree in State Key Laboratory of Rail Traffic Control and Safety.  From September 2014 to July 2015, she was a visiting student at University of Nebraska Lincoln. Since September 2016, she has been a visiting student at the Australian National University, Australia.
Her current research interests include physical-layer optimization for wireless communication networks, and short-packet communications.
\end{IEEEbiography}

\begin{IEEEbiography}[{\includegraphics[width=1in,height=1.25in,clip,keepaspectratio]{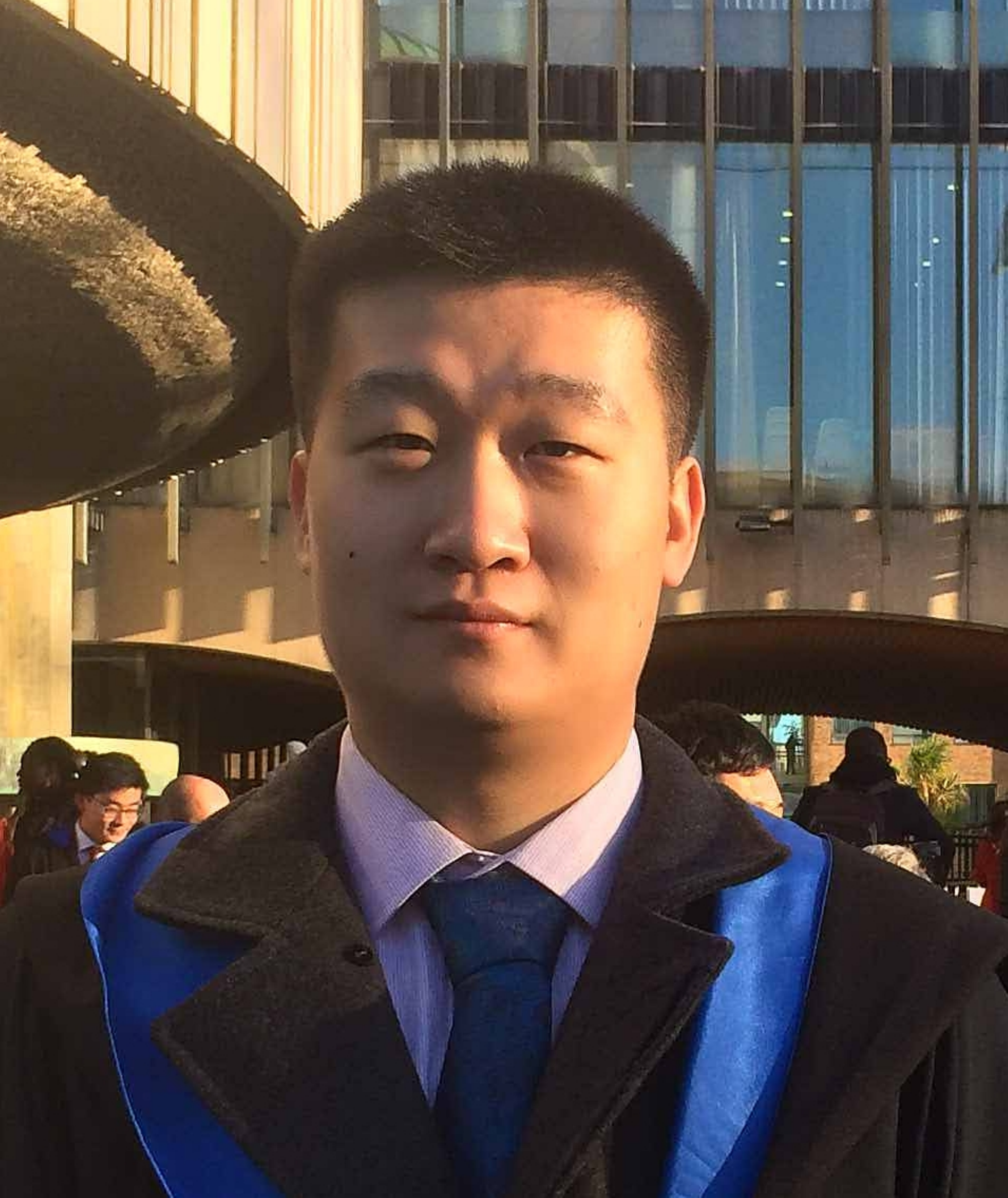}}]{Shi Yan}
received the B.S. degree in Electrical Engineering from Xidian University, Xi'an, China, in 2011 and the M.S. degree in digital signal processing from Newcastle University, Newcastle,U.K, in 2015. He is currently pursuing  the Ph.D. degree at Lancaster University, Lancaster, U.K.  His research interests include non-orthogonal multiple access, wireless power transfer and cooperative energy harvesting networks.
\end{IEEEbiography}

\begin{IEEEbiography}[{\includegraphics[width=1in,height=1.25in,clip,keepaspectratio]{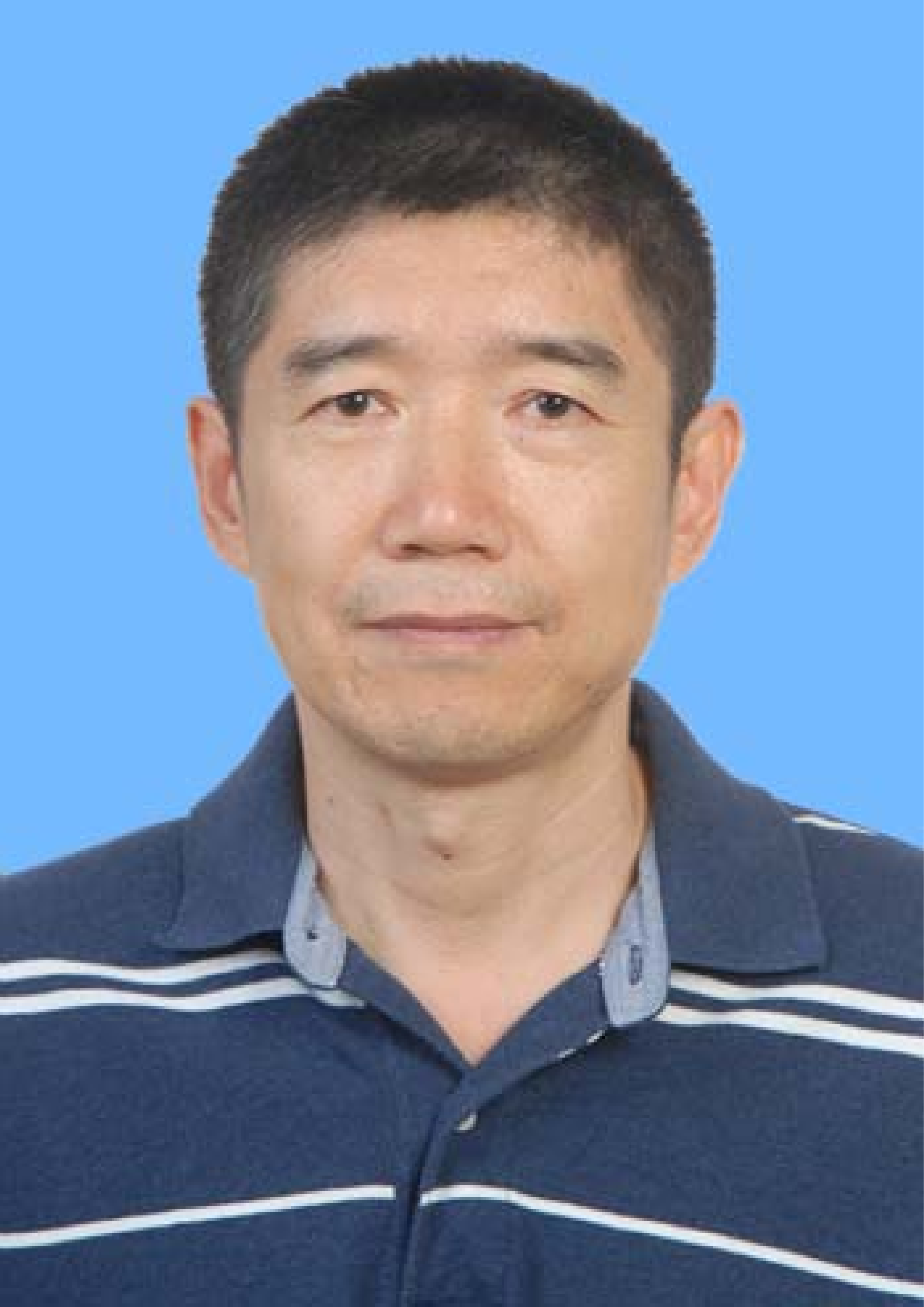}}]{Gang Zhu}
received the M.S. degree in 1993 and PhD degree in 1996 from Xi'an Jiaotong University, Xi'an, China. Since 1996, he has been with Beijing Jiaotong University, Beijing, China, where he is currently a professor. During 2000 to 2001, he was a Visiting Scholar at the
University of Waterloo, Canada.
His current research interests include
resource management in wireless communications, and
GSM for railways (GSM-R). He received Top Ten Sciences and Technology Progress of Universities in China in 2007 and First Class Award of Science and Technology in Railway in 2009.
\end{IEEEbiography}

\begin{IEEEbiography}[{\includegraphics[width=1in,height=1.25in,clip,keepaspectratio]{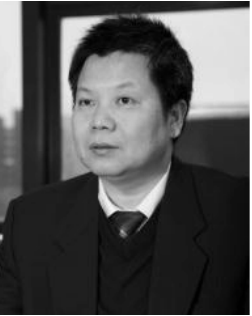}}]{Zhangdui Zhong}
(SM'16) received the B.E. and M.S. degrees from Beijing Jiaotong University, Beijing, China, in 1983 and 1988, respectively.

He is currently a Professor and an Advisor of Ph.D. candidates with Beijing Jiaotong University, where he is also currently a Chief Scientist of State Key Laboratory of Rail Traffic Control and Safety. He is also the Director of the Innovative Research Team of Ministry of Education, Beijing, and a Chief Scientist of Ministry of Railways, Beijing.
He is an Executive Council Member of the Radio Association of China, Beijing, and a Deputy Director of the Radio Association, Beijing. His interests include wireless communications for railways, control theory and techniques for railways, and GSM-R systems. His research has been widely used in railway engineering, such as the Qinghai-Xizang railway, Datong-Qinhuangdao Heavy Haul railway, and many high-speed railway lines in China. He has authored or coauthored seven books, five invention patents, and more than 200 scientific research papers in his research area.

Prof. Zhong received the Mao YiSheng Scientific Award of China, Zhan TianYou Railway Honorary Award of China, and Top 10 Science/Technology Achievements Award of Chinese Universities.
\end{IEEEbiography}

\end{document}